\pgfplotsset{compat=1.16}
\tikzstyle{state}+=[minimum size = 6mm, inner sep=0,outer sep=1]
\colorlet{disabled}{lightgray}
\tikzstyle{state}=[draw,rectangle,inner sep=5pt,rounded corners=2pt,minimum size=6mm]
\tikzstyle{action}=[font=\small,inner sep=0pt,outer sep=3pt]
\tikzstyle{actionnode}=[circle,draw=black,fill=black,minimum size=1mm,inner sep=0,outer sep=0]
\tikzstyle{actionedge}=[draw,-]
\tikzstyle{prob}=[font=\scriptsize,inner sep=0pt,outer sep=1pt]
\tikzstyle{probedge}=[draw,->]
\tikzstyle{directedge}=[draw,->]
\tikzset{chainarrow/.tip={Stealth[length=3pt]}}
\tikzset{>=chainarrow}
\DeclarePairedDelimiter{\delimabs}{\lvert}{\rvert}
\DeclarePairedDelimiter{\delimcardinality}{\lvert}{\rvert}
\DeclarePairedDelimiter{\delimnorm}{\lVert}{\rVert}
\NewDocumentCommand{\abs}{sm}{\IfBooleanTF{#1}{\delimabs*{#2}}{\delimabs{#2}}}
\NewDocumentCommand{\cardinality}{sm}{\IfBooleanTF{#1}{\delimcardinality*{#2}}{\delimcardinality{#2}}}
\NewDocumentCommand{\norm}{sm}{\IfBooleanTF{#1}{\delimnorm*{#2}}{\delimnorm{#2}}}
\NewDocumentCommand{\powerset}{r()}{2^{#1}}
\newcommand{\setcomplement}[1]{\overline{#1}}
\newcommand{\indicator}[1]{\mathbbm{1}_{#1}}
\newcommand{\unionSym}{\cup}\newcommand{\unionBin}{\mathbin{\unionSym}}\newcommand{\union}{\unionBin}
\newcommand{\intersectionSym}{\cap}\newcommand{\intersectionBin}{\mathbin{\intersectionSym}}\newcommand{\intersection}{\intersectionBin}
\newcommand{\UnionSym}{\bigcup}\newcommand{\Union}{\UnionSym}
\newcommand{\Naturals}{\mathbb{N}}
\newcommand{\Reals}{\mathbb{R}}
\DeclareMathOperator{\support}{supp}
\NewDocumentCommand{\Measures}{d()}{\IfValueTF{#1}{\Pi(#1)}{\Pi}}
\NewDocumentCommand{\Distributions}{d()}{\IfValueTF{#1}{\mathcal{D}(#1)}{\mathcal{D}}}
\NewDocumentCommand{\integral}{d<> m m}{\IfValueTF{#1}{\int_{#1} #2\,d#3}{\int #2\,d#3}}
\NewDocumentCommand{\Expectation}{s d[]}{\IfValueTF{#2}{\mathbb{E}}{\mathbb{E}\IfBooleanTF{#1}{\left[#2\right]}{[#2]}}}
\NewDocumentCommand{\Probability}{s d[]}{\mathop{\mathrm{Pr}}\IfValueT{#2}{\IfBooleanTF{#1}{\left[#2\right]}{[#2]}}}
\newcommand{\MC}{\mathsf{M}}
\newcommand{\States}{S}
\newcommand{\initialstate}{{\hat{s}}}
\NewDocumentCommand{\mctransitions}{d()}{\IfValueTF{#1}{\delta(#1)}{\delta}}
\NewDocumentCommand{\MCrestricted}{r<>}{\MC|_{#1}}
\NewDocumentCommand{\mctransitionsrestricted}{r<> d()}{\IfValueTF{#2}{\delta|_{#1}(#2)}{\delta|_{#1}}}
\newcommand{\reward}{r}
\DeclareMathOperator{\meanpayoff}{mp}
\newcommand{\infinitepath}{\rho}
\NewDocumentCommand{\Infinitepaths}{d<>}{\IfValueTF{#1}{\mathsf{Paths}_{#1}}{\mathsf{Paths}}}
\NewDocumentCommand{\Finitepaths}{d<>}{\IfValueTF{#1}{\mathsf{FPaths}_{#1}}{\mathsf{FPaths}}}
\DeclareMathOperator{\SccsOp}{SCC}\NewDocumentCommand{\Sccs}{r()}{\SccsOp(#1)}
\DeclareMathOperator{\BsccsOp}{BSCC}\NewDocumentCommand{\Bsccs}{r()}{\BsccsOp(#1)}
\NewDocumentCommand{\ProbabilityMC}{s r<> d[]}{\mathsf{Pr}_{#2}\IfNoValueF{#3}{\IfBooleanTF{#1}{\!\left[#3\right]\!}{[#3]}}}
\NewDocumentCommand{\ExpectationMC}{s r<> r[]}{\mathbb{E}_{#2}\IfBooleanTF{#1}{\!\left[#3\right]\!}{[#3]}}
\NewDocumentCommand{\ExpectedSum}{m m}{#1\langle#2\rangle}
\NewDocumentCommand{\ExpectedSumMC}{m m m}{\ExpectedSum{#1(#2)}{#3}}
\newcommand{\Reachset}{T}
\NewDocumentCommand{\stepreach}{r<>}{\Diamond^{=#1}}
\NewDocumentCommand{\boundedreach}{r<>}{\Diamond^{{\leq}#1}}
\newcommand{\reach}{\Diamond}
\NewDocumentCommand{\steadystate}{d<> d()}{\IfValueTF{#1}{\pi^\infty_{#1}}{\pi^\infty}\IfValueT{#2}{(#2)}}
\begin{document}
\title{Correct Approximation of Stationary Distributions}

\author{Tobias Meggendorfer\orcidID{0000-0002-1712-2165}}
\authorrunning{T.~Meggendorfer}

\institute{Institute of Science and Technology Austria\\
\email{tobias.meggendorfer@ista.ac.at}
}

\maketitle

\begin{abstract}
	A classical problem for Markov chains is determining their stationary (or steady-state) distribution.
	This problem has an equally classical solution based on eigenvectors and linear equation systems.
	However, this approach does not scale to large instances, and iterative solutions are desirable.
	It turns out that a naive approach, as used by current model checkers, may yield completely wrong results.
	We present a new approach, which utilizes recent advances in partial exploration and mean payoff computation to obtain a correct, converging approximation.
\end{abstract}

\section{Introduction}

\emph{Discrete-time Markov chains} (MCs) are an elegant and standard framework to describe stochastic processes, with a vast area of applications such as computer science \cite{DBLP:books/daglib/0020348}, biology \cite{paulsson2004summing}, epidemiology \cite{G_mez_2010}, and chemistry \cite{gillespie1976general}, to name a few.
In a nutshell, MC comprise a set of states and a transition function, assigning to each state a distribution over successors.
The system evolves by repeatedly drawing a successor state from the transition distribution of the current state.
This can, for example, model communication over a lossy channel, a queuing network, or populations of predator and prey which grow and interact randomly.
For many applications, the \emph{stationary distribution} of such a system is of particular interest.
Intuitively, this distribution describes in which states the system is in after an \enquote{infinite} number of steps.
For example, in a chemical reaction network this distribution could describe the equilibrium states of the mixture.


Traditionally, the stationary distribution is obtained by computing the dominant eigenvector for particular matrices and solving a series of linear equation systems.
This approach is appealing in theory, since it is polynomial in the size of the considered Markov chain.
Moreover, since linear algebra is an intensely studied field, many optimizations 
for the computations at hand are known.

In practice, these approaches however often turn out to be insufficient.
Real-world models may have millions of states, often ruling out exact solution approaches.
As such, the attention turns to iterative methods.
In particular, the popular model checker PRISM \cite{DBLP:conf/cav/KwiatkowskaNP11} employs the \emph{power method} (or \emph{power iteration}) to approximate the stationary distribution.
Similar to many other problems on Markov chains, such iterative methods have an exponential worst-case, however obtain good results quickly on many models.
(Models where iterative methods indeed converge slowly are called \emph{stiff}.)
However, as we show in this work, the \enquote{absolute change}-criterion used by PRISM to stop the iteration is incorrect.
In particular, the produced results may be arbitrarily wrong already on a model with only four states.
In \cite{DBLP:journals/tcs/HaddadM18,DBLP:conf/atva/BrazdilCCFKKPU14} the authors discuss a similar issue for the problem of \emph{reachability}, also rooted in an incorrect absolute change stopping criterion, and provide a solution through converging lower and \emph{upper} bounds.
In our case, the situations is more complicated.
The convergence of the power method is quite difficult to bound:
A good (and potentially tight) a-priori bound is given by the ratio of first and second eigenvalues, which however is as hard to determine as solving the problem itself.
In the case of MC, only a crude bound on this ratio can be obtained easily, which gives an exponential bound on the number of iterations required to achieve a given precision.
More strikingly, in contrast to reachability, there is to our knowledge no general \emph{adaptive} stopping criterion for power iteration, i.e.\ a way to check whether the current iterates are already close to the correct result.
Thus, one would always need to iterate for as many steps as given by the a-priori bound to obtain guarantees on the result.
In summary, exact solution approaches do not scale well, and the existing iterative approach may yield wrong results or requires an intractable number of steps.

Another, orthogonal issue of the mentioned approaches is that they construct the \emph{complete} system, i.e.\ determine the stationary distribution for each state.
However, when we figure out that, for example, the stationary distribution has a value of at least $99\%$ for one state, all other states can have at most $1\%$ in total.
In case we are satisfied with an \emph{approximate} solution, we could already stop the computation here, without investigating any other state.
Inspired by the results of \cite{DBLP:conf/atva/BrazdilCCFKKPU14,DBLP:journals/lmcs/KretinskyM20}, we thus also want to find such an approximate solution, capable of identifying the relevant parts of the system and only constructing those.

\subsection{Contributions}

In this work, we address all the above issues.
To this end, we
\begin{itemize}
	\item
	provide a characterization of the stationary distribution through mean payoff which allows us to obtain provably correct approximations (\cref{sec:blocks}),
	\item
	introduce a general framework to approximate the stationary distribution in Markov chains, capable of utilizing partial exploration approaches (\cref{sec:framework}),
	\item
	as the main technical contribution, provide very general, precise correctness and termination proofs, requiring only minimal assumptions (\cref{stm:framework_correct}),
	\item
	instantiate this framework with both the classical solution approach as well as our novel sampling-based interval approximation approach (\cref{sec:framework:sampling}),
	\item
	evaluate the variants of our framework experimentally (\cref{sec:evaluation}), and
	\item
	demonstrate with a minimal example that the standard approach of PRISM may yield arbitrarily wrong results (\cref{fig:power_method_error}).
\end{itemize}

\subsection{Related Work}
Most related is the work of \cite{DBLP:conf/asmta/SpielerW13}, which also try to identify the most relevant parts of the system, however they employ the special structure given by cellular processes to find these regions and estimate the subsequent approximation error.
Many other works deal with special cases, such as queueing models \cite{DBLP:journals/questa/AdanFM09,DBLP:journals/orl/KimuraM21}, time-reversible chains \cite{DBLP:journals/mst/BressanPP20}, or positive rows (all states have a transition to one particular state) \cite{DBLP:journals/nla/BusicF11,DBLP:conf/asmta/FourneauQ12,DBLP:journals/amc/NesterovN15}.
In contrast, our methods aim to deal with general Markov chains.
We highlight that for the \enquote{positive row} case, \cite{DBLP:conf/asmta/FourneauQ12} also provides converging bounds, however through a different route.
Another topic of interest are continuous time Markov chains, where abstraction- and truncation-based algorithms are applicable \cite{DBLP:journals/siamrev/KuntzTSB21,DBLP:conf/qest/BackenkohlerBGW21} and computation of the stationary distribution can be used for time-bounded reachability \cite{DBLP:conf/qest/KatoenZ06}.


\section{Preliminaries}

As usual, $\Naturals$ and $\Reals$ refer to the (positive) natural numbers and real numbers, respectively.
For a set $S$, $\setcomplement{S}$ denotes its complement, while $S^\star$ and $S^\omega$ refer to the set of finite and infinite sequences comprising elements of $S$, respectively.
We write $\indicator{S}(s) = 1$ if $s \in S$ and $0$ otherwise for the \emph{characteristic function} of $S$.

We assume familiarity with basic notions of probability theory, e.g., \emph{probability spaces}, \emph{probability measures}, and \emph{measurability}; see e.g.\ \cite{billingsley2008probability} for a general introduction.
A \emph{probability distribution} over a countable set $X$ is a mapping $d : X \to [0,1]$, such that $\sum_{x \in X} d(x) = 1$.
Its \emph{support} is denoted by $\support(d) = \{x \in X \mid d(x) > 0\}$.
$\Distributions(X)$ denotes the set of all probability distributions on $X$.
Some event happens \emph{almost surely} (a.s.) if it happens with probability $1$.

The central object of interest are Markov chains, a classical model for systems with stochastic behaviour:
	A (discrete-time time-homogeneous) \emph{Markov chain (MC)} is a tuple $\MC = (\States, \mctransitions)$, where
		$\States$ is a finite set of \emph{states}, and
		$\mctransitions : \States \to \Distributions(\States)$ is a \emph{transition function} that for each state $s$ yields a probability distribution over successor states.
We deliberately exclude the explicit definition of an initial state.
We direct the interested reader to, e.g., \cite[Sec.~10.1]{DBLP:books/daglib/0020348}, \cite[App.~A]{DBLP:books/wi/Puterman94}, or \cite{kulkarni2016modeling} for further information on Markov chains and related notions.

For ease of notation, 
we write $\mctransitions(s, s')$ instead of $\mctransitions(s)(s')$, and, given 
a function $f : \States \to \Reals$ mapping states to real numbers, 
we write $\ExpectedSumMC{\mctransitions}{s}{f} \coloneqq \sum_{s' \in \States} \mctransitions(s, s') \cdot f(s')$ to denote the weighted sum of $f$ over the successors of $s$. 

We always assume an arbitrary but fixed numbering of the states and identify a state with its respective number.
For example, given a vector $v \in \Reals^{\cardinality{\States}}$ and a state $s \in \States$, we may write $v[s]$ to denote the value associated with $s$ by $v$.
In this way, a function $v : \States \to \Reals$ is equivalent to a vector $v \in \Reals^{\cardinality{\States}}$.

For a set of states $R \subseteq S$ where no transitions leave $R$, i.e.\ $\mctransitions(s, s') = 0$ for all $s \in R$, $s' \in S \setminus R$, we define the \emph{restricted Markov chain} $\MCrestricted<R> \coloneqq (R, \mctransitionsrestricted<R>)$ with $\mctransitionsrestricted<R> : R \to \Distributions(R)$ copying the values of $\mctransitions$, i.e.\ $\mctransitionsrestricted<R>(s, s') = \mctransitions(s, s')$ for all $s, s' \in R$.

\paragraph*{Paths}
An \emph{infinite path} $\infinitepath$ in a Markov chain is an infinite sequence $\infinitepath = s_1 s_2 \cdots \in \States^\omega$, such that for every $i \in \Naturals$ we have that $\mctransitions(s_i, s_{i+1}) > 0$.
We use $\infinitepath(i)$ 
to refer to the $i$-th state $s_i$ in a given infinite path.
We denote the set of all infinite paths of a Markov chain $\MC$ by 
$\Infinitepaths<\MC>$.
Observe that in general 
$\Infinitepaths<\MC>$ is a proper subset of 
$\States^\omega$, 
as we imposed additional constraints.
A Markov chain together with an initial state $\initialstate \in \States$ induces a unique probability measure $\ProbabilityMC<\MC, \initialstate>$ over infinite paths \cite[Sec.~10.1]{DBLP:books/daglib/0020348}.
Given a measurable random variable $f : \Infinitepaths<\MC> \to \Reals$, we write $\ExpectationMC<\MC, \initialstate>[f] \coloneqq \integral<\infinitepath \in \Infinitepaths>{f(\infinitepath)}{\ProbabilityMC<\MC, \initialstate>}$ to denote its expectation w.r.t.\ this measure.

\paragraph*{Reachability}
An important tool in the following is the notion of \emph{reachability probability}, i.e.\ the probability that the system, starting from a state $\initialstate$, will eventually reach a given set $\Reachset$.
	Formally, for a Markov chain $\MC$ and set of states $\Reachset$, we define the set of runs which reach $\Reachset$ (i)~at step $n$ by $\stepreach<n> \Reachset \coloneqq \{\infinitepath \in \Infinitepaths<\MC> \mid \infinitepath(n) \in \Reachset\}$ and 
	(ii)~eventually by $\reach \Reachset = \Union_{i = 1}^\infty \stepreach<i> \Reachset$.
	(For a measurability proof see e.g.\ \cite[Chp.~10]{DBLP:books/daglib/0020348}.)
For a state $\initialstate$, the probability to reach $\Reachset$ 
is given by $\ProbabilityMC<\MC, \initialstate>[\reach \Reachset]$.

Classically, the reachability probability can be determined by solving a linear equation system, as follows.
For a fixed target set $\Reachset$, let $S_0$ be all states that cannot reach $\Reachset$.
Note that $S_0$ can be determined by simple graph analysis.
Then, the reachability probability $\ProbabilityMC<\MC, \initialstate>[\reach \Reachset]$ is the unique solution of \cite[Thm.~10.19]{DBLP:books/daglib/0020348}
\begin{equation} \label{eq:reachability_value_fixpoint}
	f(s) = 1 \text{ if $s \in \Reachset$, } \quad 0 \text{ if $s \in S_0$, } \quad \text{ and } \quad \ExpectedSumMC{\mctransitions}{s}{f} \text{ otherwise}.
\end{equation}

\paragraph*{Value Iteration}
A classical tool to deal with Markov chains is \emph{value iteration} (VI) \cite{bellman1966dynamic}.
It is a simple yet surprisingly efficient and extendable approach to solve a variety of problems.
At its heart, VI relies, as the name suggests, on iteratively applying an operation to a value vector.
This operation often is called \enquote{Bellman backup} or \enquote{Bellman update}, usually derived from a fixed-point characterization of the problem at hand.
Thus, VI often can be viewed as fixed point iteration.
For reachability, inspired by \cref{eq:reachability_value_fixpoint}, we start from $v_1[s] = 0$ and iterate
\begin{equation} \label{eq:value_iteration}
	v_{k+1}[s] = 1 \text{ if $s \in \Reachset$, } \quad 0 \text{ if $s \in S_0$, } \quad \text{ and } \quad \ExpectedSumMC{\mctransitions}{s}{v_k} \text{ otherwise}.
\end{equation}
This iteration monotonically converges to the true value in the limit from below \cite[Thm.~10.15]{DBLP:books/daglib/0020348}, \cite[Thm.~7.2.12]{DBLP:books/wi/Puterman94}.
Convergence up to a given precision may take exponential time \cite[Thm.~3]{DBLP:journals/tcs/HaddadM18}, but in practice VI often is much faster than methods based on equation solving.
For further details, see \iftoggle{arxiv}{\cref{app:value_iteration}}{\cite[App.~A.2]{arxiv}}.

\paragraph*{Strongly Connected Components}
A non-empty set of states $C \subseteq \States$ in a Markov chain is \emph{strongly connected} if for every pair $s, s' \in C$ there is a non-empty finite path from $s$ to $s'$.
Such a set $C$ is a \emph{strongly connected component} (SCC) if it is inclusion maximal, i.e.\ there exists no strongly connected $C'$ with $C \subsetneq C'$.
SCCs are disjoint, each state belongs to at most one SCC.
An SCC is \emph{bottom} (BSCC) if additionally no path leads out of it, i.e.\ for all $s \in C, s' \in S \setminus C$ we have $\mctransitions(s, s') = 0$.
The set of BSCCs in an MC $\MC$ is denoted by $\Bsccs(\MC)$ and can be determined in linear time by, e.g., Tarjan's algorithm~\cite{DBLP:journals/siamcomp/Tarjan72}. 
%

The bottom components fully capture the limit behaviour of any Markov chain.
Intuitively, the following statement says that (i)~with probability one a run of a Markov chain eventually forever remains inside one single BSCC, and (ii)~inside a BSCC, all states are visited infinitely often with probability one.
\begin{lemma}[{\cite[Thm.~10.27]{DBLP:books/daglib/0020348}}] \label{stm:mc_bscc_properties}
	For any MC $\MC$ and state $s$, we have
	\begin{equation*}
		\ProbabilityMC<\MC, s>[\{\infinitepath \mid \exists R_i \in \Bsccs(\MC). \exists n_0 \in \Naturals. \forall n > n_0. \infinitepath(n) \in R_i\}] = 1.
	\end{equation*}
	For any BSCC $R \in \Bsccs(\MC)$ and states $s, s' \in R$, we have $\ProbabilityMC<\MC, s>[\reach \{s'\}] = 1$.
\end{lemma}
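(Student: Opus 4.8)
The plan is to establish the two assertions separately, both resting on one elementary principle: if, from every state, a prescribed target is reached within a bounded number of steps with probability at least some fixed $p > 0$, then it is reached almost surely. This follows by cutting an infinite path into consecutive blocks of that fixed length and using time-homogeneity and the Markov property — the probability of missing the target throughout the first $k$ blocks is at most $(1-p)^k$, which tends to $0$.

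\emph{Second assertion.} Fix a BSCC $R \in \Bsccs(\MC)$ and states $s, s' \in R$. Since no transition leaves $R$, the restriction $\MCrestricted<R>$ is itself a Markov chain on the finite set $R$, and strong connectedness supplies, for every $t \in R$, a finite path from $t$ to $s'$ of length at most $\cardinality{R}$; each such path has strictly positive probability. Letting $p$ be the minimum, over the finitely many $t \in R$, of the probability of reaching $s'$ from $t$ within $\cardinality{R}$ steps, we get $p > 0$, and the principle above yields $\ProbabilityMC<\MC, s>[\reach \{s'\}] = 1$.

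\emph{First assertion.} Write $B \coloneqq \Union_{R_i \in \Bsccs(\MC)} R_i$. The combinatorial core is that every state can reach $B$: collapsing $\MC$ to the acyclic graph of its SCCs, from every vertex one reaches some sink by repeatedly following outgoing edges (possible since that graph is finite and acyclic), and every sink SCC is, by definition, a BSCC. Hence from every state $B$ is reached within $\cardinality{\States}$ steps with probability at least some $p > 0$, so $\ProbabilityMC<\MC, s>[\reach B] = 1$. Moreover, once a path enters some $R_i$ it stays in $R_i$ at all later steps, precisely because $R_i$ has no outgoing transitions. Therefore, on the probability-one event $\reach B$, taking $n_0$ to be the index of the first visit to $B$ — which lies in a unique $R_i$ since BSCCs are pairwise disjoint — we have $\infinitepath(n) \in R_i$ for all $n > n_0$, which is exactly the event in the statement.

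\emph{Main obstacle.} Both ingredients — the geometric-decay argument and the graph fact that every state reaches a BSCC — are routine, so the only real care is in making the block decomposition rigorous: one conditions on the state reached at the end of each block and invokes the Markov property and time-homogeneity to bound the conditional miss-probability uniformly by $1-p$. Measurability of all events involved is standard and is already referenced in the preliminaries.
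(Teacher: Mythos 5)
Your proof is correct. The paper does not prove this lemma at all—it is imported verbatim from Baier–Katoen (Thm.~10.27)—and your argument (geometric decay over fixed-length blocks, strong connectedness within a BSCC for the second claim, and the SCC-condensation/sink argument plus absorption for the first) is precisely the standard textbook proof of that cited result, with only routine details left implicit (e.g.\ that a sink of the condensation that is a trivial singleton must carry a self-loop because the transition function is total, hence is indeed a BSCC in the paper's sense).
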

\paragraph*{Stationary Distribution} \label{sec:prelim:steadystate}
Given a state $\initialstate$, the \emph{stationary distribution} (also known as \emph{steady-state} or \emph{long-run distribution}) of a Markov chain intuitively describes, for each state $s$, the probability for the system to be at this particular state at an arbitrarily chosen step \enquote{at infinity}.
There are several ways to define this notion.
In particular, there is a subtle difference between the \emph{limiting} and \emph{stationary distribution}, which however coincide for \emph{aperiodic} MC.
For the sake of readability, we omit this distinction and assume w.l.o.g.\ that all MCs we deal with are aperiodic.
See \iftoggle{arxiv}{\cref{app:periodic}}{\cite[App.~A.1]{arxiv}} for further discussion.
Our definition follows the view of \cite[Def.~10.79]{DBLP:books/daglib/0020348}; see \cite[Sec.~A.4]{DBLP:books/wi/Puterman94} for a different approach.
\begin{figure}[t]
	\centering
	\begin{tikzpicture}[auto,initial text=,xscale=2]
		\node[state,initial above] at (0, 0) (s) {$s$};
		\node[state] at (-1, 0) (p) {$p$};
		\node[state] at (1, 0) (q1) {$q_1$};
		\node[state] at (2, 0) (q2) {$q_2$};

		\path[probedge]
			(s) edge[swap] node[prob] {$0.5$} (p)
			(s) edge node[prob] {$0.5$} (q1)

			(p) edge[loop above] node[prob] {$1.0$} (p)

			(q1) edge[loop above] node[prob] {$0.5$} (q1)
			(q1) edge[bend left=40] node[prob] {$0.5$} (q2)
			(q2) edge[loop above] node[prob] {$0.9$} (q2)
			(q2) edge[bend left=40] node[prob] {$0.1$} (q1)
		;
	\end{tikzpicture}
	\caption{
		Example MC to demonstrate the stationary distribution.
		We have that $\steadystate<\MC, s> = \{p \mapsto \frac{1}{2}, s \mapsto 0, q_1 \mapsto \frac{1}{2} \cdot \frac{1}{6}, q_2 \mapsto \frac{1}{2} \cdot \frac{5}{6}\}$.
	} \label{fig:steady_state_example}
\end{figure}
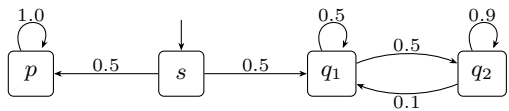
\begin{definition} \label{def:stationary}
	Fix a Markov chain $\MC = (\States, \mctransitions)$ and initial state $\initialstate$.
	Let $\pi_{\MC, \initialstate}^n(s) \coloneqq \ProbabilityMC<\MC, \initialstate>[\stepreach<n>\{s\}]$ the probability that the system is at state $s$ in step $n$.
	Then, $\steadystate<\MC, \initialstate>(s) \coloneqq \lim_{n \to \infty} \frac{1}{n} \sum_{i = 1}^n \pi_{\MC, \initialstate}^i(s)$ is the stationary distribution of $\MC$.
\end{definition}
See \cref{fig:steady_state_example} for an example.
Whenever the reference is clear from context, we omit the respective subscripts from $\steadystate<\MC, \initialstate>$.

We briefly recall the classical approach to compute stationary distributions (see e.g.\ \cite[Sec.~4.7]{kulkarni2016modeling}).
By \cref{stm:mc_bscc_properties}, almost all runs eventually end up in a BSCC.
Thus, $\steadystate(s) = 0$ for all states $s$ not in a BSCC, or, dually, $\sum_{s \in B} \steadystate(s) = 1$ for $B = \Union_{R \in \Bsccs(\MC)} R$.
Moreover, once in a BSCC, we always obtain the same stationary distribution, irrespective of through which state we entered the BSCC.
Formally, for each BSCC $R \in \Bsccs(\MC)$ and $s, s' \in R$, we have that $\steadystate<\MC, s> = \steadystate<\MC, s'> = \steadystate<\MCrestricted<R>, s>$, i.e.\ each BSCC $R$ has a unique stationary distribution, which we denote by $\steadystate<R>$.
Note that $\support(\steadystate<R>) = R$, i.e.\ $\steadystate<R>(s) \neq 0$ if and only if $s \in R$.
Together, we observe that the stationary distribution of a Markov chain decomposes into (i)~the steady state distribution in each BSCC and (ii)~the probability to end up in a particular BSCC.
More formally, for any state $s \in \States$
\begin{equation}
\steadystate<\MC, \initialstate>(s) = {\sum}_{R \in \Bsccs(\MC)} \ProbabilityMC<\MC, \initialstate>[\reach R] \cdot \steadystate<R>(s). \label{eq:steady_state_decomposition}
\end{equation}
%
%
Consider the example of \cref{fig:steady_state_example}:
We have two BSCCs, $\{p\}$ and $\{q_1, q_2\}$, which both are reached with probability $\frac{1}{2}$, respectively.
The overall distribution $\steadystate<\MC, s>$ then is obtained from $\steadystate<\{p\}> = \{p \mapsto 1\}$ and $\steadystate<\{q_1, q_2\}> = \{q_1 \mapsto \frac{1}{6}, q_2 \mapsto \frac{5}{6}\}$.

As mentioned, we can compute reachability probabilities in Markov chains by solving \cref{eq:reachability_value_fixpoint}.
Thus, the remaining concern is to compute $\steadystate<R>$, i.e.\ the stationary distribution of $\MCrestricted<R>$.
In this case, i.e.\ Markov chains comprising a single BSCC, the steady state distribution is the unique fixed point of the transition function (up to rescaling).
By defining the row transition matrix of $\MC$ as $P_{i, j} = \mctransitions(i, j)$, we can reformulate this property in terms of linear algebra.
In particular, we have that $P \cdot \steadystate<R> = \steadystate<R>$, or, in other words, $(P - I) \cdot \steadystate<R> = \vec{0}$, where $I$ is an appropriately sized identity matrix \cite[Thm.~A.2]{DBLP:books/wi/Puterman94}.
This equation again can be solved by classical methods from linear algebra.
In summary, we (i)~compute $\Bsccs(\MC)$, (ii)~for each BSCC $R$, compute $\steadystate<R>$ and $\ProbabilityMC<\MC, \initialstate>[\reach R]$, and (iii)~combine according to \cref{eq:steady_state_decomposition}.

However, as also mentioned in the introduction, precisely solving linear equation systems may not scale well, both due to time as well as memory constraints. 
Thus, we also are interested in relaxing the problem slightly and instead \emph{approximating} the stationary distribution up to a given precision of $\varepsilon > 0$.
\begin{framed}\textbf{Problem Statement}
	Given a Markov chain $\MC$ and precision requirement $\varepsilon > 0$, compute bounds $l, u : \States \to [0, 1]$ such that (i)~$\max_{s \in \States} u(s) - l(s) \leq \varepsilon$ and (ii)~for all $s \in \States$ we have $l(s) \leq \steadystate<\MC, \initialstate>(s) \leq u(s)$.
\end{framed}
\paragraph*{Approximate Solutions}
Aiming for approximations is not a new idea; to achieve practical performance, current model checkers employ approximate, iterative methods by default for most queries (typically a variant value iteration).
In particular, this also is the case for stationary distribution:
Instead of solving the equation system for each BSCC $R$ precisely, we can approximate the solution by, e.g., the \emph{power method}.
This essentially means to repeatedly apply the transition matrix (of the model restricted to the BSCC) to an initial vector $v_0$, i.e.\ iterating $v_{n+1} = P_R \cdot v_n$ (or $v_{n+1} = P_R^n \cdot v_1$).
Similarly, the reachability probability for each BSCC then also is approximated by value iteration.

It is known that (for aperiodic MC) $\lim_{n \to \infty} v_n = \steadystate<R>$ (see e.g.\ \cite{DBLP:books/daglib/0017293,DBLP:conf/qest/KatoenZ06,DBLP:journals/amc/NesterovN15}), however convergence up to a precision of $\varepsilon$ may take exponential time in the worst case.
Moreover, there is no known stopping criterion which allows us to detect that we have converged and stop the computation early.
Yet, similar to reachability \cite{DBLP:conf/atva/BrazdilCCFKKPU14,DBLP:journals/tcs/HaddadM18}, current model checkers employ this method without a sound stopping criterion, leading to potentially arbitrarily wrong results, as we show in our evaluation (\cref{fig:power_method_error}).
See \cite{DBLP:conf/qest/KatoenZ06} for a related, in-depth discussion of these issues in the context of CTMC.

We thus want to find efficient methods to derive safe bounds on the stationary distribution of a BSCC with a correct stopping criterion and combine it with correct reachability approximations to obtain an overall fast and sound approximation.
To this end, we exploit two further concepts.

\paragraph*{Partial Exploration}
Recent works \cite{DBLP:conf/atva/BrazdilCCFKKPU14,DBLP:conf/cav/AshokCDKM17,DBLP:journals/lmcs/KretinskyM20,pet} demonstrate the applicability of \emph{partial exploration} to a variety of problems associated with probabilistic systems such as reachability.
Essentially, the idea is to \enquote{omit} parts of the system which can be proven to be irrelevant for the result, instead focussing on important areas of the system.
Of course, by omitting parts of the system, we may incur a small error.
As such, these approaches naturally aim for approximate solutions.
%

\paragraph*{Mean payoff} \label{sec:prelim:mean_payoff}
We make use of another property, namely \emph{mean payoff} (also known as \emph{long-run average reward}).
We provide a brief overview and direct to e.g.\ \cite[Chp.~8 \& 9]{DBLP:books/wi/Puterman94} or \cite{DBLP:conf/cav/AshokCDKM17} for more information.
Mean payoff is specified by a Markov chain and a \emph{reward function} $\reward : \States \to \Reals$, assigning a reward to each state.
Given an infinite path $\infinitepath = s_1 s_2 \cdots$, this naturally induces a stream of rewards $\reward(\infinitepath) \coloneqq \reward(s_1) \reward(s_2) \cdots$.
The mean payoff of this path then equals the average reward obtained in the limit, $\meanpayoff'_\reward(\infinitepath) \coloneqq \liminf_{n \to \infty} \frac{1}{n} \sum_{i = 1}^n \reward(s_i)$.
(The limit might not be defined for some paths, hence considering the $\liminf$ is necessary.)
Finally, the mean payoff of a state $s$ is the \emph{expected mean payoff} according to $\ProbabilityMC<\MC, s>$, i.e.\ $\meanpayoff_\reward(s) \coloneqq \ExpectationMC<\MC, s>[\meanpayoff'_\reward]$.

Classically, mean payoff is computed by solving a linear equation system \cite[Thm.~9.1.2]{DBLP:books/wi/Puterman94}.
Instead, we can also employ value iteration to approximate the mean payoff, however with a slight twist.
We iteratively compute the \emph{expected total reward}, i.e.\ the expected sum of rewards obtained after $n$ steps, by iterating $v_{n+1}(s) = \reward(s) + \ExpectedSumMC{\mctransitions}{s}{v_n}$.
It turns out that the \emph{increase} $\Delta_n(s) = v_{n+1}(s) - v_n(s)$ approximates the mean payoff, i.e.\ $\meanpayoff_\reward(s) = \lim_{n \to \infty} \Delta_n(s)$ \cite[Thm.~9.4.5 a)]{DBLP:books/wi/Puterman94}.
Moreover, we have $\min_{s' \in S} \Delta_n(s') \leq \meanpayoff_\reward(s) \leq \max_{s' \in S} \Delta_n(s')$, yielding a correct stopping criterion \cite[Thm.~9.4.5 b)]{DBLP:books/wi/Puterman94}.
Finally, on BSCCs these upper and lower bounds always converge \cite[Cor.~9.4.6 b)]{DBLP:books/wi/Puterman94}, yielding termination guarantees.
We provide further details on VI for mean payoff in \iftoggle{arxiv}{\cref{app:mean_payoff}}{\cite[App.~A.3]{arxiv}}.
\section{Building Blocks} \label{sec:blocks}

To arrive at a practical algorithm approximating the stationary distribution, we propose to employ sampling-based techniques, inspired by, e.g.\ \cite{DBLP:conf/atva/BrazdilCCFKKPU14,DBLP:conf/cav/AshokCDKM17,DBLP:journals/lmcs/KretinskyM20}.
Intuitively, these approaches repeatedly sample paths and compute bounds on a single property such as reachability or mean payoff.
The sampling is designed to follow probable paths with high probability, hence the computation automatically focuses on the most relevant parts of the system.
Additionally, by building the system \emph{on the fly}, construction of hardly reachable parts of the system may be avoided altogether, yielding immense speed-ups for some models (see, e.g., \cite{DBLP:journals/lmcs/KretinskyM20} for additional background).
We apply a series of tweaks to the original idea to tailor this approach to our use case, i.e.\ approximating the stationary distribution.

In this section, we present the \enquote{building blocks} for our approximate approach.
In the spirit of \cref{eq:steady_state_decomposition}, we discuss how we handle a single BSCC and how to approximate the reachability probabilities of all BSCCs.
In the following section, we then combine these two approaches in a non-trivial manner.
\subsection{Bounds in BSSCs through Mean Payoff} \label{sec:blocks:bscc_mean_payoff}
It is well known that the mean payoff can be computed directly from the stationary distribution \cite[Prop.~8.1.1]{DBLP:books/wi/Puterman94}, namely:
\begin{equation} \label{eq:mean_payoff_through_steady_state}
	\meanpayoff_\reward(s) = {\sum}_{s' \in \States} \steadystate<\MC, s>(s') \cdot \reward(s')
\end{equation}
In this section, we propose the opposite, namely computing the stationary distribution of a BSCC through mean payoff queries.
Fix a Markov chain $\MC = (\States, \mctransitions)$ which comprises a single BSCC, i.e.\ $\States \in \Bsccs(\MC)$, and define $\reward(s') = \indicator{\{s\}}(s')$, i.e.\ $1$ for $s$ and $0$ otherwise.
Then, the mean payoff corresponds to the frequency of $s$ appearing, i.e.\ the stationary distribution.
Formally, we have that $\steadystate<\MC, \initialstate>(s) = \meanpayoff_\reward(s')$ for any state $s'$ (in a BSCC, all states have the same value).
This also follows directly by inserting in \cref{eq:mean_payoff_through_steady_state}.
So, naively, for each state of the BSCC, we can solve a mean payoff query, and from these results obtain the overall stationary distribution.

At first, this may seem excessive, especially considering that computing the complete stationary distribution is as hard as determining the mean payoff for one state (both can be obtained by solving a linearly sized equation system).
However, this idea yields some interesting benefits.
Firstly, using the approximation approach discussed in \cref{sec:prelim:mean_payoff}, we obtain a practical approximation scheme with converging bounds for each state.
As such, we can quickly stop the computation if the bounds converge fast.
Moreover, we can pause and restart the computation for each state, which we will use later on in order to focus on crucial states.
Finally, observe that $\steadystate<R>$ is a distribution.
Thus, having lower bounds on some states actually already yields upper bounds for remaining states.
Formally, for some lower bound $l : \States \to [0, 1]$, we have $\steadystate<R>(s) \leq 1 - \sum_{s' \in \States, s' \neq s} l(s')$.
If during our computation it turns out that a few states are actually visited very frequently, i.e.\ the sum of their lower bounds is close to $1$, we can already stop the computation without ever investigating the other states.
Note that this only is possible since we obtain provably correct bounds.

\begin{algorithm}[t]
	\caption{Approximate Stationary Distribution in BSCC} \label{alg:approx_bscc}
	\begin{algorithmic}[1]
		\Require Markov chain $\MC = (\States, \mctransitions)$ with $\Bsccs(\MC) = \{\States\}$
		\Ensure Bounds $l, u$ on stationary distribution $\steadystate<S>$.
		\State $n \gets 1$
		\For {$s \in \States$} $l_1(s) \gets 0$, $u_1(s) \gets 1$
		\EndFor
		\For {$s \in \States$}
			\State $m \gets 1$, $v_1 \gets \Call{InitGuess}{s}$
			
			\While{not $\Call{ShouldStop}{s, m, \Delta_m}$} \Comment{Iterate until some stopping criterion}
				\For {$s' \in \States$} $v_{m+1}(s') \gets \indicator{\{s\}}(s') + \ExpectedSumMC{\mctransitions}{s'}{v_m}$ \Comment{Mean payoff VI for $s$}
				\EndFor
				\State $m \gets m + 1$
			\EndWhile

			\State $l'_n(s) \gets \max\big(l_n(s), \min_{s' \in \States} \Delta_m(s')\big)$, $u'_n(s) \gets \min\big(u_n(s), \max_{s' \in \States} \Delta_m(s')\big)$
			\For {$s' \in \States \setminus \{s\}$} $l'_n(s') \gets l_n(s')$, $u'_n(s') \gets u_n(s')$
			\EndFor

			\For {$s' \in \States$} \Comment{Update bounds based on current results (optional)}
				\State $l_{n + 1}(s') \gets \max\big( l'_n(s'), 1 - \sum_{s'' \in \States, s'' \neq s'} u'_n(s'') \big)$
				\State $u_{n + 1}(s') \gets \min\big( u'_n(s'), 1 - \sum_{s'' \in \States, s'' \neq s'} l'_n(s'') \big)$
			\EndFor
			\State $n \gets n + 1$ and copy all unchanged values from $n$ to $n + 1$
		\EndFor
		\State \Return $(l_n, u_n)$
	\end{algorithmic}
\end{algorithm}

Combining these ideas, we present our first algorithm template in \cref{alg:approx_bscc}.
We solve each state separately, by applying the classical value iteration approach for mean payoff until a termination criterion is satisfied.
To allow for modifications, we leave the definition of several sub-procedures open.
Firstly, $\Call{InitGuess}{}$ initializes the value vector for each mean payoff computation.
We can naively choose $0$ everywhere, obtain an initial guess by heuristics, or re-use previously computed values.
Secondly, $\Call{ShouldStop}{}$ decides when to stop the iteration for each state.
A simple choice is to iterate until $\max \Delta_m(s) - \min \Delta_m(s) < \varepsilon$ for some precision requirement $\varepsilon$.
By results on mean payoff, we can conclude that in this case the stationary distribution is computed with a precision of $\varepsilon$.
However, as we argue later on, more sophisticated choices are possible.
Finally, the order in which states are chosen is not fixed.
Indeed, any order yields correct results, however heuristically re-ordering the states may also bring practical benefits.

Before we continue, we briefly argue that the algorithm is correct.
\begin{theorem}
	The result returned by \cref{alg:approx_bscc} is correct for any MC $\MC = (\States, \mctransitions)$ with $\Bsccs(\MC) = \{\States\}$.
\end{theorem}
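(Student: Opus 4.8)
The plan is to establish \emph{soundness of the returned bounds} via a loop invariant; recall that in a single-BSCC chain $\steadystate<\MC, \initialstate> = \steadystate<S>$ for every initial state, so \enquote{correct} means exactly $l(s) \le \steadystate<S>(s) \le u(s)$ for all $s \in \States$. (Whether the algorithm returns at all depends on the concrete \textsc{ShouldStop}; the statement only concerns the \emph{returned} result.) The invariant I would carry along the outer \textbf{for}-loop is: the current bounds $l, u$ satisfy $l(s') \le \steadystate<S>(s') \le u(s')$ for all $s' \in \States$. The base case is immediate, since $l_1 \equiv 0$, $u_1 \equiv 1$, and $\steadystate<S>$ is $[0,1]$-valued.

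For the inductive step, fix the state $s$ being processed and assume the invariant for $l_n, u_n$. The inner \textbf{while}-loop is precisely the expected-total-reward value iteration recalled in \cref{sec:prelim:mean_payoff} for the reward $\reward = \indicator{\{s\}}$, with $v_{m+1}(s') = \reward(s') + \ExpectedSumMC{\mctransitions}{s'}{v_m}$ and $\Delta_m(s') = v_{m+1}(s') - v_m(s')$. Two facts combine here. First, by the characterization of \cref{sec:blocks:bscc_mean_payoff} (equivalently, by instantiating \cref{eq:mean_payoff_through_steady_state} with $\reward = \indicator{\{s\}}$), the mean payoff $\meanpayoff_\reward$ is constant over states and equals $\steadystate<S>(s)$. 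Second, mean-payoff value iteration satisfies $\min_{s' \in \States} \Delta_m(s') \le \meanpayoff_\reward(s') \le \max_{s' \in \States} \Delta_m(s')$ at \emph{every} step $m \ge 1$, not merely in the limit; together with the first fact, $\min_{s'} \Delta_m(s') \le \steadystate<S>(s) \le \max_{s'} \Delta_m(s')$ at whatever step the loop exits, regardless of which termination test is used. Hence $l'_n(s) = \max(l_n(s), \min_{s'} \Delta_m(s')) \le \steadystate<S>(s)$ and $u'_n(s) = \min(u_n(s), \max_{s'} \Delta_m(s')) \ge \steadystate<S>(s)$, using the inductive hypothesis for the other argument of $\max$/$\min$; for $s' \ne s$ the bounds are copied unchanged. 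So $(l'_n, u'_n)$ still satisfies the invariant.

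It remains to check the optional distribution-tightening step. Since $\steadystate<S>$ is a probability distribution on $\States$ we have $\sum_{s'' \in \States}\steadystate<S>(s'') = 1$, so for every $s'$ the already-verified upper bounds give $\sum_{s'' \ne s'} u'_n(s'') \ge \sum_{s'' \ne s'}\steadystate<S>(s'') = 1 - \steadystate<S>(s')$, that is, $1 - \sum_{s'' \ne s'} u'_n(s'') \le \steadystate<S>(s')$; symmetrically $1 - \sum_{s'' \ne s'} l'_n(s'') \ge \steadystate<S>(s')$. Therefore $l_{n+1}(s') = \max(l'_n(s'),\, 1 - \sum_{s'' \ne s'} u'_n(s''))$ stays $\le \steadystate<S>(s')$ and $u_{n+1}(s')$ stays $\ge \steadystate<S>(s')$. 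The \enquote{copy unchanged values} bookkeeping changes nothing, so the invariant persists and the returned $(l_n, u_n)$ is correct.

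The one point I would spell out rather than merely cite is that the sandwich $\min_{s'}\Delta_m(s') \le \meanpayoff_\reward \le \max_{s'}\Delta_m(s')$ survives an arbitrary \textsc{InitGuess}, since the algorithm explicitly permits warm-starting $v_1$ with reused values instead of $\mathbf{0}$. This holds because $\Delta_{m+1} = P\Delta_m$ (with $P$ the transition matrix of $\MC$) for any choice of $v_1$, so $\min_{s'}\Delta_m(s')$ is nondecreasing and $\max_{s'}\Delta_m(s')$ is nonincreasing in $m$ and both converge to $\meanpayoff_\reward$ (the chain being a single aperiodic BSCC); hence they bracket the limit at every finite step. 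Everything else is routine monotonicity and substitution, so I do not expect a serious obstacle beyond this bookkeeping.
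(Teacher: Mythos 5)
Your proof is correct and follows essentially the same route as the paper's own sketch: an inductive invariant on the bound vectors, soundness of the $\min/\max$ $\Delta_m$ sandwich from mean-payoff value iteration combined with the identity $\meanpayoff_{\indicator{\{s\}}} = \steadystate<S>(s)$, and the distribution property $\sum_{s'}\steadystate<S>(s')=1$ for the optional tightening step. Your explicit justification that the sandwich survives an arbitrary \textsc{InitGuess} (via $\Delta_{m+1}=P\Delta_m$) is a welcome extra detail that the paper only delegates to the cited results of Puterman and a remark in its appendix.
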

\begin{proof}[Sketch]
	Correctness of the mean payoff iteration follows from the definition of the reward function, \cref{eq:mean_payoff_through_steady_state}, and the correctness of value iteration for mean payoff \cite[Sec.~8.5]{DBLP:books/wi/Puterman94}.
	In particular, note that the states of the MC form a single BSCC and the model is \emph{unichain} (see \cite[Chp.~A]{DBLP:books/wi/Puterman94}), implying that all states have the same value.
	For $l$ and $u$, we prove correctness inductively.
	The initial values are trivially correct.
	The updates based on the mean payoff computation are correct by the above arguments and by induction hypothesis:
	The maximum of two correct lower bounds still is a lower bound, analogous for the upper bound.
	The updates based on the bounds are correct since $\steadystate<R>$ is a distribution and $l'$, $u'$ are correct bounds. \qed
\end{proof}
We deliberately omit introducing an explicit precision requirement in the algorithm, since we will use it as a building block later on.
\begin{remark} \label{rem:space}
	A variant of this approach also allows for memory savings:
	By handling one state at a time, we only need to store linearly many additional values (in the number of states) at any time, while an explicit equation system may require quadratic space.
	This only yields a constant factor improvement if the system is represented explicitly (storing $\mctransitions$ requires as much space), however can be of significant merit for symbolically encoded systems.
	Note that this comes at a cost:
	As we cannot stop and resume the computation for different states, we have to determine the correct result up to the required precision immediately.
\end{remark}

\subsection{Reachability and Guided Sampling}
As mentioned before, the second challenge to obtain a stationary distribution is the reachability probability for each BSCC.
We employ a sampling-based approach using insights from \cite{DBLP:conf/atva/BrazdilCCFKKPU14}.
There, the authors considered a single reachability objective, i.e.\ a single value per state.
In contrast, we need to bound reachability probabilities for each BSCC.
For now, suppose that all BSCCs are already discovered and their respective stationary distribution is already computed (or approximated).
In other words, we have for each BSCC $R \in \Bsccs(\MC)$ bounds $l^R, u^R : R \to [0, 1]$ with $l_R(s) \leq \steadystate<R>(s) \leq u_R(s)$, and we want to obtain bounds on the stationary distribution, i.e.\ functions $l$, $u$ such that $l(s) \leq \steadystate<\MC, \initialstate>(s) \leq u(s)$.
We propose to additionally compute bounds on the probability to reach each BSCC $R$, i.e.\ functions $l^{\reach R}$ and $u^{\reach R}$ such that $l^{\reach R}(s) \leq \ProbabilityMC<\MC, s>[\reach R] \leq u^{\reach R}(s)$.
By \cref{eq:steady_state_decomposition}, we then have for each state $s$ a bound on the stationary distribution
\begin{equation*}
	{\sum}_{R \in \Bsccs(\MC)} l^{\reach R}(\initialstate) \cdot l^R(s) \leq \steadystate<\MC, \initialstate>(s) \leq {\sum}_{R \in \Bsccs(\MC)} u^{\reach R}(\initialstate) \cdot u^R(s).
\end{equation*}

We take a route similar to \cite{DBLP:conf/atva/BrazdilCCFKKPU14}.
There, the algorithm essentially samples a path through the system, possibly guided by a heuristic, terminates the sampling based on several criteria, and then propagates the reachability value backwards along the path, repeating until termination.
We propose a simple modification, namely to sample until a BSCC is reached, and then propagate the reachability values of that particular BSCC back along the path.
Moreover, we can employ a similar trick as above:
Due to \cref{stm:mc_bscc_properties}, the reachability probabilities of BSCCs sum up to one, i.e.\ $\sum_{R \in \Bsccs(\MC)} \ProbabilityMC<\MC, s>[\reach R] = 1$ for every state $s$.
Hence, the sum of lower bounds also yields upper bounds for other BSCCs, even those we have never encountered so far.

\begin{algorithm}[t]
	\caption{Approximate BSCC Reachability} \label{alg:approx_reach}
	\begin{algorithmic}[1]
		\Require Markov chain $\MC = (\States, \mctransitions)$
		\Ensure For each BSCC $R$ bounds $l^{\reach R}, u^{\reach R}$ on the probability to reach $R$.
		\State $B \gets \Union_{R \in \Bsccs(\MC)} R$, $n \gets 1$
		\For {$R \in \Bsccs(\MC)$}
			\For {$s \in R$} $l^{\reach R}_1(s) \gets 1$, $u^{\reach R}_1(s) \gets 1$
			\EndFor
			\For {$s \in B \setminus R$} $l^{\reach R}_1(s) \gets 0$, $u^{\reach R}_1(s) \gets 0$
			\EndFor
			\For {$s \in \States \setminus B$} $l^{\reach R}_1(s) \gets 0$, $u^{\reach R}_1(s) \gets 1$
			\EndFor
		\EndFor
		\While {$\Call{ShouldSample}{}$} \Comment{Sample until some stopping criterion}
			\State $P \gets \Call{SampleStates}{}$ \Comment{Select states to update (e.g.\ sample a path)}
			\For {$R \in \Call{SelectUpdate}{P}$} \Comment{Select BSCCs to update}
				\For {$s \in P$}
					\State $l^{\reach R}_{n+1}(s) \gets \ExpectedSumMC{\mctransitions}{s}{l^{\reach R}_n}$
					\State $u^{\reach R}_{n+1}(s) \gets \ExpectedSumMC{\mctransitions}{s}{u^{\reach R}_n}$
				\EndFor
			\EndFor

			\For {$s \in \States$} \Comment{Update bounds based on current results (optional)}
				\For {$R \in \Bsccs(\MC)$}
					\State $l^{\reach R}_{n+1}(s) \gets \max\big(l^{\reach R}_n(s), 1 - \sum_{R' \in \Bsccs(\MC), R' \neq R} u^{R'}_n(s)\big)$
					\State $u^{\reach R}_{n+1}(s) \gets \min\big(u^{\reach R}_n(s), 1 - \sum_{R' \in \Bsccs(\MC), R' \neq R} l^{R'}_n(s)\big)$
				\EndFor
			\EndFor
			\State $n \gets n + 1$ and copy unchanged values from $l^{\reach R}_n$ and $u^{\reach R}_n$ to $l^{\reach R}_{n+1}$ and $u^{\reach R}_{n+1}$
		\EndWhile
		\State \Return $\{(l^{\reach R}, u^{\reach R}) \mid R \in \Bsccs(R)\}$
	\end{algorithmic}
\end{algorithm}

Our ideas are summarized in \cref{alg:approx_reach}.
As before, the algorithm leaves several choices open.
Instead of requiring to sample a path, our algorithm allows to select an arbitrary set of states to update.
We note that the exact choice of this sampling mechanism does not improve the worst case runtime.
However, as first observed in \cite{DBLP:conf/atva/BrazdilCCFKKPU14}, specially crafted \emph{guidance heuristics} can achieve dramatic practical speed-ups on several models.
Later on, we combine our two algorithms and derive such a heuristic.
For now, we briefly prove correctness.
\begin{theorem}
	The result returned by \cref{alg:approx_reach} is correct for any MC $\MC = (\States, \mctransitions)$ with $\Bsccs(\MC) = \{\States\}$.
\end{theorem}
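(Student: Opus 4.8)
The plan is to mirror the inductive correctness argument used for \cref{alg:approx_bscc}. Writing $f_R(s) \coloneqq \ProbabilityMC<\MC, s>[\reach R]$ for the exact probability of reaching the BSCC $R$ from $s$, I would establish the invariant $l^{\reach R}_n(s) \le f_R(s) \le u^{\reach R}_n(s)$ for every $R \in \Bsccs(\MC)$ and every state $s$, by induction on the iteration counter $n$; correctness of the returned values is then immediate whenever the while-loop terminates. For the base case, recall that $f_R(s) = 1$ for $s \in R$, that $f_R(s) = 0$ whenever $s$ lies in a different BSCC $R' \ne R$ (since no path leaves $R'$), and that $f_R(s) \in [0,1]$ for all remaining states; these are exactly the initial assignments to $l^{\reach R}_1$ and $u^{\reach R}_1$.

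For the inductive step I would check that each of the (at most) three kinds of assignment performed in one pass of the while-loop preserves the invariant. The crucial one is the Bellman-style update $l^{\reach R}_{n+1}(s) \gets \ExpectedSumMC{\mctransitions}{s}{l^{\reach R}_n}$ (and symmetrically for $u$): one first observes that $f_R$ is a fixed point of the transition operator on \emph{all} of $\States$, i.e.\ $f_R(s) = \ExpectedSumMC{\mctransitions}{s}{f_R}$ for every $s$ --- both sides equal $1$ for $s \in R$, both equal $0$ for $s$ from which $R$ is unreachable (every successor of such an $s$ again cannot reach $R$), and for the remaining states this is precisely \cref{eq:reachability_value_fixpoint}. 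Since $v \mapsto \ExpectedSumMC{\mctransitions}{s}{v}$ is monotone, $l^{\reach R}_n \le f_R$ pointwise implies $\ExpectedSumMC{\mctransitions}{s}{l^{\reach R}_n} \le \ExpectedSumMC{\mctransitions}{s}{f_R} = f_R(s)$, while the states and BSCCs not selected for updating retain values that are correct by hypothesis; the argument for $u$ is dual. The optional ``distribution'' updates are justified by $\sum_{R' \in \Bsccs(\MC)} f_{R'}(s) = 1$, which follows from \cref{stm:mc_bscc_properties}: this yields $f_R(s) = 1 - \sum_{R' \ne R} f_{R'}(s) \ge 1 - \sum_{R' \ne R} u^{\reach R'}_n(s)$ and dually $f_R(s) \le 1 - \sum_{R' \ne R} l^{\reach R'}_n(s)$, so replacing a bound by the maximum (resp.\ minimum) with such an expression keeps it sound, since the larger of two lower bounds (resp.\ the smaller of two upper bounds) is again a valid bound. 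Finally, copying unchanged values over to index $n+1$ trivially preserves the invariant, which closes the induction.

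The step I expect to need the most care is the Bellman update applied to an \emph{arbitrary} set $P$ of states: one must argue that ``backing up'' through a state of $R$ itself, or through a state from which $R$ cannot be reached, never violates soundness --- which is exactly why I phrase the invariant via the global identity $f_R(s) = \ExpectedSumMC{\mctransitions}{s}{f_R}$ rather than the boundary-restricted equation \cref{eq:reachability_value_fixpoint}, which holds only off the boundary. As an aside, the claim is really intended for arbitrary $\MC$; under the literal hypothesis $\Bsccs(\MC) = \{\States\}$ inherited from the preceding theorem the only BSCC is $\States$, $f_\States \equiv 1$, and \cref{alg:approx_reach} returns the constant $1$ before ever entering the loop, so the statement holds trivially, while the argument above covers the general case.
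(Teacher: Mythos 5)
Your proof is correct and follows essentially the same route as the paper's own sketch: induction on the iteration index, with the Bellman-style back-propagation justified via \cref{eq:reachability_value_fixpoint} and the normalization updates via \cref{stm:mc_bscc_properties}; your extra care in phrasing the invariant through the global fixed-point identity $f_R(s) = \ExpectedSumMC{\mctransitions}{s}{f_R}$ just makes explicit why updating arbitrary state sets is sound. Your aside about the hypothesis $\Bsccs(\MC) = \{\States\}$ is also well taken --- it is evidently a copy-paste artifact from the preceding theorem, and the intended claim (which both you and the paper actually prove) is for arbitrary Markov chains.
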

\begin{proof}[Sketch]
	Similar to the previous algorithm, we prove correctness by induction.
	The initial values for $l^{\reach R}$ and $u^{\reach R}$ are correct.
	Then, assume that $l^{\reach R}_n$ and $u^{\reach R}_n$ are correct bounds.
	The correctness of the back propagation updates follows directly by inserting in \cref{eq:reachability_value_fixpoint} (or other works on interval value iteration \cite{DBLP:conf/atva/BrazdilCCFKKPU14,DBLP:journals/tcs/HaddadM18}).
	Updates based on the bounds in other states are correct by \cref{stm:mc_bscc_properties} -- the sum of all BSCC reachability probabilities is 1.
	Together, this yields correctness of the bounds computed by the algorithm. \qed
\end{proof}
To obtain termination, it is sufficient to require that every state eventually is selected \enquote{arbitrarily often} by \Call{SampleStates}{}.
However, as before, we delegate the termination proof to our combined algorithm in the following section.
\section{Dynamic Computation with Partial Exploration} \label{sec:framework}
Recall that our overarching goal is to approximate the stationary distribution through \cref{eq:mean_payoff_through_steady_state}.
In the previous section, we have seen how we can (i)~obtain approximations for a given BSCC and (ii)~how to approximate the reachability probabilities of all BSCCs through sampling.
However, the naive combination of these algorithms would require us to compute the set of all BSCCs, approximate the stationary distribution in each of them until a fixed precision, and additionally approximate reachability for each of them.

We now combine both ideas to obtain a sampling-based algorithm, capable of partial exploration, that focusses computation on relevant parts of the system.
In particular, we construct the system dynamically, identify BSCCs on the fly, and interleave the exploration with both the approximation inside each explored BSCC (\cref{alg:approx_bscc}) and the overall reachability computation (\cref{alg:approx_reach}).
Moreover, we focus computation on BSCCs which are likely to be reached and thus have a higher impact on the overall error of the result.
Together, our approach roughly performs the following steps until the required precision is achieved:
\begin{itemize}
	\item Sample a path through the system, guided by a heuristic,
	\item check if a new BSCCs is discovered or sampling ended in a known BSCC,
	\item refine bounds on the stationary distribution in the reached BSCC, and
	\item propagate reachability bounds and additional information along the path.
\end{itemize}
We first formalize a generic framework which can instantiate the classical, precise approach as well as our approximation building blocks and then explain our concrete variant of this framework to efficiently obtain $\varepsilon$-precise bounds.
\subsection{The Framework}
Since our goal is to allow for both precise as well as approximate solutions, we phrase the framework using lower and upper bounds together with abstract refinement procedures.
We first explain our algorithm and how it generalizes the classical approach.
Then, we prove its correctness under general assumptions.
Finally, we discuss several approximate variants.

\begin{algorithm}[t]
	\caption{Stationary Distribution Computation Framework} \label{alg:framework}
	\begin{algorithmic}[1]
		\Require Markov chain $\MC = (\States, \mctransitions)$, initial state $\initialstate$, precision $\varepsilon > 0$
		\Ensure $\varepsilon$-precise bounds $l, u$ on the stationary distribution $\steadystate<\MC, \initialstate>$

		\For {$s \in \States$} \Comment{Initial bounds for all possible BSSCs that can be discovered}
			\State $l^{\reach \circ}_1(s) = 0$, $u^{\reach \circ}_1(s) = 1$, $l^\circ_1(s) \gets 0$, $u^\circ_1(s) \gets 1$ \label{line:framework:init}
		\EndFor
		\State $n \gets 1$, $\mathcal{B}_1 \gets \emptyset$

		\While {$\big( 1 - \sum_{R \in \mathcal{B}_n} l^{\reach R}_n(\initialstate) \big) + \sum_{R \in \mathcal{B}_n} \big( l^{\reach R}_n(\initialstate) \cdot \max_{s \in \States} (u^R_n(s) - l^R_n(s)) \big) > \varepsilon$} \label{line:framework:loop}
			\State $n \gets n + 1$

			\State $\mathcal{B}_n \gets \Call{UpdateBSSCs}{}$, $B_n \gets \Union_{R \in \mathcal{B}_n} R$ \Comment{Discover new BSCCs} \label{line:framework:bscc_explore}
			\For {$R \in \mathcal{B}_n \setminus \mathcal{B}_{n-1}$, $s \in R$} \Comment{Update trivial reach bounds}
				\State $l^{\reach R}_n(s) \gets 1$ \label{line:framework:bscc_explore_reach_lower_bounds} \Comment{$s \in R$ surely reaches $R$}
				\For {$\circ \neq R$} $u^{\reach \circ}_n(s) \gets 0$ \label{line:framework:bscc_explore_reach_upper_bounds} \Comment{$s \in R$ reaches no other BSCC}
				\EndFor
			\EndFor
			\For {$R \in \Call{SelectDistributionUpdates}{\mathcal{B}_n} \intersection \mathcal{B}_n$}
				\State $(l^R_n, u^R_n) \gets \Call{RefineDistribution}{R}$ \label{line:framework:bssc_update_bounds} \Comment{Update BSCC bounds}
			\EndFor
			\For {$R \in \Call{SelectReachUpdates}{\mathcal{B}_n} \intersection \mathcal{B}_n$}
				\State $(l^{\reach R}_n, u^{\reach R}_n) \gets \Call{RefineReach}{R}$ \label{line:framework:reach_update_bounds} \Comment{Update reachability bounds}
			\EndFor

			\State Copy unchanged variables from $n - 1$ to $n$ \label{line:framework:copy}
		\EndWhile
		\State $L \gets \sum_{R \in \mathcal{B}_n} l^{\reach R}_n(\initialstate)$ \label{line:framework:result_lower_bounds_sum}
		\For {$R \in \mathcal{B}_n$, $s \in R$}
			\State $l(s) \gets l^{\reach R}_n(\initialstate) \cdot l^R_n(s)$
			\State $u(s) \gets \min(u^{\reach R}_n(\initialstate), 1 - L + l^{\reach R}_n(\initialstate)) \cdot u^R_n(s)$ \label{line:framework:compute_result}
		\EndFor
		\For {$s \in \States \setminus B_n$} $l(s) \gets 0$, $u(s) \gets 0$ \label{line:framework:result_zero_states}
		\EndFor
		\State \Return $(l, u)$
	\end{algorithmic}
\end{algorithm}

\Cref{alg:framework} essentially repeats three steps until the termination condition in \cref{line:framework:loop} is satisfied.
First, we update the set of known BSCCs through \Call{UpdateBSSCs}{}.
In the classical solution, this function simply computes $\Bsccs(\MC)$ once; our on-the-fly construction would repeatedly check for newly discovered BSCCs, dynamically growing the set $\mathcal{B}_n$.
Then, we select BSCCs for which we should update the stationary distribution bounds.
The classical solution solves the fixed point equation we have discussed in \cref{sec:prelim:steadystate} for all BSCCs, i.e.\ \Call{SelectDistributionUpdates}{} yields $\Bsccs(\MC)$ and \Call{RefineDistribution}{} the precisely computed values both as upper and lower bounds.
Alternatively, we could, for example, select a single BSCC and apply a few iterations of \cref{alg:approx_bscc}.
Next, we update reachability bounds for a selected set of BSCCs.
Again, the classical solution solves the reachability problem precisely for each BSCC through \cref{eq:reachability_value_fixpoint}.
Instead, we could employ value iteration as suggested by \cref{alg:approx_reach}.

Before we present our variant, we prove correctness under weak assumptions.
We note a subtlety of the termination condition:
One may assume that upper bounds on the reachability are required to bound the overall error caused by each BSCC.
Yet, as we show in the following theorem, \emph{lower} bounds are sufficient.
The upper bound is implicitly handled by the first part of the termination condition.
\begin{theorem} \label{stm:framework_correct}
	The result returned by \cref{alg:framework} is correct, i.e.\ $\varepsilon$ precise bounds on the stationary distribution, if (i)~$\mathcal{B}_n \subseteq \mathcal{B}_{n+1} \subseteq \Bsccs(\MC)$ for all $n$, and (ii)~\Call{RefineDistribution}{} and \Call{RefineReach}{} yield correct, monotone bounds.
\end{theorem}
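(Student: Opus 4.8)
The plan is to argue correctness in two parts: first that the bounds maintained throughout the loop are sound, and second that the termination condition together with these invariants guarantees the final output is $\varepsilon$-precise. I would phrase both parts as inductive invariants on the loop counter $n$. Concretely, I would show that at every iteration $n$, for every $R \in \mathcal{B}_n$ and $s \in R$ we have $l^R_n(s) \le \steadystate<R>(s) \le u^R_n(s)$, and for every $R \in \mathcal{B}_n$ and every $s \in \States$ we have $l^{\reach R}_n(s) \le \ProbabilityMC<\MC, s>[\reach R] \le u^{\reach R}_n(s)$. The base case uses assumption (i) (so $\mathcal{B}_1 = \emptyset$ is trivially fine, or the trivial $[0,1]$ bounds in line~\ref{line:framework:init}). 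The induction step splits according to which updates occur: the newly-discovered-BSCC updates in lines~\ref{line:framework:bscc_explore_reach_lower_bounds}--\ref{line:framework:bscc_explore_reach_upper_bounds} are sound because a state inside a BSCC reaches its own BSCC with probability $1$ and no other BSCC (using \cref{stm:mc_bscc_properties} and assumption (i), which guarantees $R$ really is a BSCC); the \Call{RefineDistribution}{} and \Call{RefineReach}{} updates preserve soundness by assumption (ii); and all other variables are copied unchanged in line~\ref{line:framework:copy}. I would note that monotonicity from assumption (ii) is what lets me keep using the same invariant rather than worrying about bounds being overwritten by worse ones.

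The second part is the heart of the argument and where the real work lies: showing that when the loop in line~\ref{line:framework:loop} exits, the output $(l,u)$ computed in lines~\ref{line:framework:result_lower_bounds_sum}--\ref{line:framework:result_zero_states} satisfies both $l(s) \le \steadystate<\MC,\initialstate>(s) \le u(s)$ and $\max_s (u(s) - l(s)) \le \varepsilon$. For soundness of $l$ and $u$ I would start from the decomposition \cref{eq:steady_state_decomposition}, $\steadystate<\MC,\initialstate>(s) = \sum_{R \in \Bsccs(\MC)} \ProbabilityMC<\MC,\initialstate>[\reach R] \cdot \steadystate<R>(s)$, and observe that for $s \in R$ only the term for that one $R$ contributes (since $\support(\steadystate<R'>) = R'$), so $\steadystate<\MC,\initialstate>(s) = \ProbabilityMC<\MC,\initialstate>[\reach R]\cdot\steadystate<R>(s)$; the lower bound $l^{\reach R}_n(\initialstate)\cdot l^R_n(s)$ is then immediate from the invariants. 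For states outside $B_n$, line~\ref{line:framework:result_zero_states} sets $l=u=0$, which is sound precisely because assumption~(i) gives $\mathcal{B}_n \subseteq \Bsccs(\MC)$ — wait, that's the wrong direction; I need that a state outside $B_n$ has stationary value $0$, which requires that states not in \emph{any} discovered BSCC... actually this needs care: a state $s \notin B_n$ could still lie in an undiscovered BSCC. The resolution is that when the loop exits, the first term $1 - \sum_{R\in\mathcal{B}_n} l^{\reach R}_n(\initialstate) \le \varepsilon$, so the total stationary mass on all of $\States \setminus B_n$ is at most $\varepsilon$, hence setting those values to $0$ incurs error at most $\varepsilon$ — the output is still $\varepsilon$-precise even if not pointwise sound there. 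I would need to state the correctness guarantee carefully as "$\varepsilon$-precise bounds" in the sense of the Problem Statement, i.e.\ $u(s) - l(s) \le \varepsilon$ and soundness only up to the $\varepsilon$ slack; re-reading the theorem statement, it says "$\varepsilon$ precise bounds on the stationary distribution", and the subtle point the author flags (lower bounds on reachability suffice) confirms this is the intended reading.

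For the key upper-bound term $u(s) = \min(u^{\reach R}_n(\initialstate),\, 1 - L + l^{\reach R}_n(\initialstate)) \cdot u^R_n(s)$, I would justify the $1 - L + l^{\reach R}_n(\initialstate)$ factor by noting that the reachability probabilities over \emph{all} BSCCs sum to $1$, so $\ProbabilityMC<\MC,\initialstate>[\reach R] = 1 - \sum_{R' \neq R} \ProbabilityMC<\MC,\initialstate>[\reach R'] \le 1 - \sum_{R' \in \mathcal{B}_n, R' \neq R} l^{\reach R'}_n(\initialstate) = 1 - (L - l^{\reach R}_n(\initialstate))$, which is a valid upper bound on $\ProbabilityMC<\MC,\initialstate>[\reach R]$ even without having discovered all BSCCs. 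Then the width bound: for $s \in R$,
\begin{align*}
	u(s) - l(s) &\le u^{\reach R}_n(\initialstate)\cdot u^R_n(s) - l^{\reach R}_n(\initialstate)\cdot l^R_n(s)\\
	&\le l^{\reach R}_n(\initialstate)\cdot(u^R_n(s) - l^R_n(s)) + (u^{\reach R}_n(\initialstate) - l^{\reach R}_n(\initialstate)),
\end{align*}
and here one uses $u^{\reach R}_n(\initialstate) \le 1 - L + l^{\reach R}_n(\initialstate)$ (from the $\min$) to bound the second summand by $1 - \sum_{R' \in \mathcal{B}_n} l^{\reach R'}_n(\initialstate)$. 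Summing the contribution pattern appropriately and comparing with the loop guard in line~\ref{line:framework:loop} then yields $\max_s (u(s)-l(s)) \le \varepsilon$. The main obstacle I anticipate is exactly this last bookkeeping step — matching the per-state width bound against the precise algebraic form of the termination expression, including handling the $\min$ in line~\ref{line:framework:compute_result} and the contribution of states outside $B_n$ — and being precise about what "correct" means (pointwise-sound up to the $\varepsilon$ budget, rather than strictly pointwise-sound) so that the zero-assignment in line~\ref{line:framework:result_zero_states} is justified. I do not expect \emph{termination} to be part of this theorem (it concerns only correctness of the returned result assuming the loop exits); termination under a fairness assumption on the \Call{Select...}{} procedures would be argued separately, presumably right after.
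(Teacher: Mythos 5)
Your proposal matches the paper's proof essentially step for step: the same soundness invariants for $l^{\reach R}_n, u^{\reach R}_n, l^R_n, u^R_n$ established by checking each update against \cref{stm:mc_bscc_properties} and assumption (ii), the same case split on whether $s$ lies in no BSCC, an undiscovered BSCC, or a discovered one, the same observation that undiscovered BSCCs carry at most $\varepsilon$ mass by the first term of the loop guard (so the zero assignment in \cref{line:framework:result_zero_states} is only $\varepsilon$-sound, exactly as you flag), and the same final bound $u(s)-l(s) \le l^{\reach R}_n(\initialstate)\cdot\max_{s'}(u^R_n(s')-l^R_n(s')) + 1 - L \le \varepsilon$. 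One small repair to your algebra: in the displayed chain you first drop the $\min$ in favour of $u^{\reach R}_n(\initialstate)$ and only afterwards invoke $u^{\reach R}_n(\initialstate) \le 1-L+l^{\reach R}_n(\initialstate)$ \enquote{from the $\min$}, which the $\min$ does not actually give you; do as the paper does and bound $\min(\cdot,\cdot)$ by its second argument $1-L+l^{\reach R}_n(\initialstate)$ in the very first step, after which the identical rearrangement (using $u^R_n(s)\le 1$) goes through.
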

The proof can be found in \iftoggle{arxiv}{\cref{app:proof:framework_correct}}{\cite[App.~B.1]{arxiv}}.
\begin{remark}
	Technically, the algorithm does not need to track explicit upper bounds on the reachability of each BSCC at all.
	Indeed, for a BSCC $R \in \mathcal{B}_n$, we could use $1 - \sum_{R' \in \Bsccs(\MC) \setminus \{R\}} l^{\reach R'}_n(s)$ as upper bound and still obtain a correct algorithm.
	However, tracking a separate upper bound is easier to understand and has some practical benefits for the implementation.
\end{remark}
We exclude a proof of termination, since this strongly depends on the interplay between the functions left open.
We provide a general, technical criterion together with a proof in \iftoggle{arxiv}{\cref{app:proof:framework_termination}}{\cite[App.~B.2]{arxiv}}.
Intuitively, as one might expect, we require that eventually \Call{UpdateBSSCs}{} identifies all relevant BSCCs, \Call{SelectDistributionUpdates}{} and \Call{SelectReachUpdates}{} select all relevant BSCCs, and \Call{RefineDistribution}{} and \Call{RefineReach}{} converge to the respective true value.
In the following, we present a concrete template which satisfies this criterion.
%
%
%
%
\subsection{Sampling-Based Computation} \label{sec:framework:sampling}
We present our instantiation of \cref{alg:framework} using guided sampling and heuristics.
Since the details of the sampling guidance heuristic are rather technical, we focus on how the template functions \Call{UpdateBSSCs}{}, \Call{SelectDistributionUpdates}{}, \Call{RefineDistribution}{}, \Call{SelectReachUpdates}{}, and \Call{RefineReach}{} are instantiated. 
For now, the reader may assume that states are, e.g., selected by sampling random paths through the system.
\begin{itemize}
	\item \Call{UpdateBSSCs}{}:
		We track the set of \emph{explored} states, i.e.\ states which have already been sampled at least once.
		On these, we search for BSCCs whenever we repeatedly stop sampling due to a state re-appearing.

	\item \Call{SelectDistributionUpdates}{}:
		If we stopped sampling due to entering a known BSCC, we update the bounds of this single one, otherwise none.

	\item \Call{RefineDistribution}{}:
		We employ \cref{alg:approx_bscc} to refine the bounds until the error over all states is halved.

	\item \Call{SelectReachUpdates}{}:
		We refine the reach values for all sampled states.

	\item \Call{RefineReach}{}:
		If we stopped sampling due to entering a BSCC, we back-propagate the reachability bounds for this BSCC in the spirit of \cref{alg:approx_reach}, i.e.\ for all sampled states set $l^{\reach R}_{n+1}(s) = \ExpectedSumMC{\mctransitions}{s}{l^{\reach R}_n}$ and $u^{\reach R}_{n+1}(s) = \ExpectedSumMC{\mctransitions}{s}{u^{\reach R}_n}$.
\end{itemize}
\newcommand{\errorbound}{\mathsf{err}} %
We prove that this yields correct results and terminates with probability 1 through \cref{stm:framework_correct}.
Note that this description leaves exact details of the sampling open.
Thus, we prove termination using (weak) conditions on the sampling mechanism.
For readability, we define the shorthand $\errorbound^R_n = \max_{s \in R} u_n^R(s) - l_n^R(s)$ denoting the overall error of the stationary distribution in BSCC $R$ and $\errorbound^{\reach R}_n(s) = u^{\reach R}_n(s) - l^{\reach R}_n(s)$ the error bound on the reachability of $R$ from $s$.
\begin{theorem} \label{stm:sampling}
	\Cref{alg:framework} instantiated with our sampling-based approach yields correct results and terminates with probability 1 if, with probability 1,
	\begin{enumerate}[label=(S.\roman*),labelsep=*,leftmargin=*]
		\item \label{stm:sampling:core}
		the sampled states $P \subseteq \States$ satisfy $\ProbabilityMC<\MC, \initialstate>[\reach \setcomplement{P}] < \frac{\varepsilon}{4}$ ($P$ is a \emph{$\frac{\varepsilon}{4}$-core} \cite{DBLP:journals/lmcs/KretinskyM20}),
		\item \label{stm:sampling:initial}
		the initial state is sampled arbitrarily often, and
		\item \label{stm:sampling:successor}
		for each state $s$ sampled arbitrarily often, every successor $s' \in P$ with $E_n(s') \coloneqq \max_{R \in \mathcal{B}_n} u^{\reach R}_n(s') \cdot \errorbound^R_n + \max_{R \in \mathcal{B}_n} \errorbound^{\reach R}_n(s) \geq \frac{\varepsilon}{4 (\cardinality{\mathcal{B}_n} + 1)}$ is sampled arbitrarily often,
	\end{enumerate}
	where \enquote{arbitrarily often} means that if the algorithm would not terminate, this would happen infinitely often.
\end{theorem}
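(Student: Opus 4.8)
The plan is to establish correctness and termination separately, reducing correctness to the already-proven \cref{stm:framework_correct} and arguing termination via a well-founded measure driven by the three sampling conditions.

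For \textbf{correctness}, I would verify that the sampling-based instantiation satisfies the two hypotheses of \cref{stm:framework_correct}. Condition (i), namely $\mathcal{B}_n \subseteq \mathcal{B}_{n+1} \subseteq \Bsccs(\MC)$, follows because \Call{UpdateBSSCs}{} only ever adds sets it has verified to be genuine BSCCs among explored states (a set with no outgoing transitions is a BSCC of $\MC$), and it never removes them. Condition (ii) is immediate from the building blocks: \Call{RefineDistribution}{} runs \cref{alg:approx_bscc}, whose correctness and monotonicity were argued in \cref{sec:blocks:bscc_mean_payoff}, and \Call{RefineReach}{} performs exactly the interval value-iteration back-propagation of \cref{alg:approx_reach}, whose monotone correctness follows from \cref{eq:reachability_value_fixpoint} together with the BSCC-reachability summing to one (\cref{stm:mc_bscc_properties}). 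Hence \cref{stm:framework_correct} applies verbatim and the returned $(l,u)$ are $\varepsilon$-precise on termination.

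For \textbf{termination}, the argument is by contradiction: assume the algorithm runs forever, so by the meaning of \enquote{arbitrarily often} conditions \ref{stm:sampling:core}--\ref{stm:sampling:successor} hold infinitely often. First, by \ref{stm:sampling:core} the eventually-stable core $P$ has $\ProbabilityMC<\MC,\initialstate>[\reach \setcomplement P] < \frac{\varepsilon}{4}$; since almost every path reaches a BSCC (\cref{stm:mc_bscc_properties}), every BSCC reachable with probability at least $\frac{\varepsilon}{4}$ lies inside $P$ and is therefore eventually discovered, so $\sum_{R \in \mathcal{B}_n} \ProbabilityMC<\MC,\initialstate>[\reach R] \geq 1 - \frac{\varepsilon}{4}$ for large $n$, bounding the first summand of the \cref{line:framework:loop} condition. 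For the second summand, I would show that for each discovered BSCC $R$ the error $\errorbound^R_n$ tends to $0$: whenever sampling ends in $R$ (which happens arbitrarily often, since $\initialstate$ is sampled arbitrarily often by \ref{stm:sampling:initial} and, propagating through \ref{stm:sampling:successor}, a path into $R$ is taken arbitrarily often), \Call{RefineDistribution}{} halves the error, and by the convergence guarantees of mean-payoff VI on a BSCC (\cite[Cor.~9.4.6~b)]{DBLP:books/wi/Puterman94}) each such halving actually makes progress. Likewise $\errorbound^{\reach R}_n(\initialstate) \to 0$: the quantity $E_n(s')$ in \ref{stm:sampling:successor} captures exactly the error that a successor contributes to the reach-bound at $s$ via one Bellman backup, so the condition forces repeated back-propagation along every edge on which error remains, driving the reachability intervals to their fixpoint by convergence of interval value iteration on the (sub-stochastic, absorbing-on-$B_n$) reachability chain.

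The \textbf{main obstacle} is the propagation argument in the last step: showing that \ref{stm:sampling:initial} and \ref{stm:sampling:successor} together force \emph{every} edge carrying residual error to be refined arbitrarily often, and that these refinements jointly converge rather than merely decreasing. The delicate point is that the threshold in \ref{stm:sampling:successor} is state-dependent and shrinks only as the \emph{total} budget $\frac{\varepsilon}{4(|\mathcal{B}_n|+1)}$ — so one must argue that $|\mathcal{B}_n|$ stabilizes (finitely many BSCCs) and then run an induction over the finite explored sub-chain, from BSCCs outward along reverse edges, to conclude $E_n(s) \to 0$ for all relevant $s$; combined with the core property this makes the \cref{line:framework:loop} guard eventually fall below $\varepsilon$, contradicting non-termination. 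I would carry the detailed version of this in the appendix alongside the general termination criterion referenced after \cref{stm:framework_correct}.
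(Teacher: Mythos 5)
Your proposal is correct and follows essentially the same route as the paper: correctness is delegated to \cref{stm:framework_correct}, and termination rests on (a) the $\frac{\varepsilon}{4}$-core property forcing discovery of all BSCCs reached with non-negligible probability and (b) the observation that back-propagated bounds satisfy a one-step Bellman inequality, so residual error at $\initialstate$ propagates along some edge-wise high-error path into an imprecise BSCC, which \ref{stm:sampling:initial} and \ref{stm:sampling:successor} then force to be refined arbitrarily often. The only structural difference is that the paper factors the termination half through a separately stated general criterion (\cref{stm:framework_termination}) and checks its five conditions, whereas you inline that reasoning; the "main obstacle" you flag is resolved in the paper exactly by the successor-monotonicity of $E_n$ that you sketch.
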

The proof can be found in \iftoggle{arxiv}{\cref{app:proof:sampling}}{\cite[App.~B.3]{arxiv}}.

Due to space constraints, we omit an in-depth description of our sampling method and only provide a brief summary here.
In summary, our algorithm first selects a \enquote{sampling target} which is either \enquote{the unknown}, i.e.\ states not seen so far, to encourage exploration in the style of \cite{DBLP:journals/lmcs/KretinskyM20}, or a known BSCC, to bias sampling towards it.
We select a choice randomly, weighted by its current potential influence on the precision.
The sampling process is guided by the chosen target, taking actions which lead to the respective target with high probability.
In technical terms, we sample successors weighted by the upper bound on reachability probability times the transition probability.
Once the target is reached, we either explore the unknown, or improve precision in the reached BSCC.
Finally, information is back-propagated along the path.
Further details, in particular pitfalls we encountered during the design process, together with a complete instantiation of our algorithm can be found in \iftoggle{arxiv}{\cref{app:algorithm}}{\cite[App.~C]{arxiv}}.
\section{Experimental Evaluation} \label{sec:evaluation}

In this section, we evaluate our approaches, comparing to both our own reference implementation using classical methods, as well as the established model checker PRISM \cite{DBLP:conf/cav/KwiatkowskaNP11}.
(The other popular model checkers Storm \cite{DBLP:conf/cav/DehnertJK017} and IscasMC/ePMC \cite{DBLP:conf/fm/HahnLSTZ14} do not directly support computing stationary distributions.)
We implemented our methods in Java based on PET \cite{pet}, running on consumer hardware (AMD Ryzen 5 3600).
To solve arising linear equation systems, we use the library \texttt{Jeigen v1.2}.
All executions are performed in a Docker container, restricted to a single CPU core and 8GB of RAM.
For our approximation approaches, we require a precision of $\varepsilon = 10^{-4}$.

\paragraph{Tools}
Aside from \texttt{PRISM}\footnote{
	We observed that the default hybrid engine typically is significantly slower than the \enquote{explicit} variant and thus use that one, see \iftoggle{arxiv}{\cref{app:results}}{\cite[App.~D]{arxiv}}.
}, we consider three variants of \cref{alg:framework}, namely
	\texttt{Classic}, the classical approach, solving each BSCC through a linear equation system and then approximating the reachability through PRISM (using interval iteration),
	\texttt{Naive}, the naive sampling approach, following the transition dynamics, and
	\texttt{Sample}, our sampling approach, selecting a target and steering towards it.
The sourcecode of our implementation used to run these experiments as well as all models and our data is available at \cite{zenodo}.
Moreover, the current version can be found at GitHub \cite{github}.

We mention two points relevant for the comparison.
First, as we show in the following, \texttt{PRISM} may yield wrong results due to a (too) simple computation.
As such, we should not expect that our correct methods are on par or even faster.
Second, our implementation employs conservative procedures to further increase quality of the result, such as compensated summation to mitigate numerical error due to floating-point imprecision, noticeably increasing computational effort.

\paragraph{Models}
We consider the PRISM benchmark suite\footnote{Obtained from \url{https://github.com/prismmodelchecker/prism-benchmarks}.} \cite{DBLP:conf/qest/KwiatkowskaNP12}, comprising several probabilistic models, in particular DTMC, CTMC, and MDP.
Since there are not too many Markov chains in this set, we obtain further models as follows.
For each CTMC, we consider the \emph{uniformized CTMC} (which preserves the steady state distribution), and for MDP we choose actions uniformly at random.
Unfortunately, \emph{all} models obtained this way either comprise only single-state BSCCs or the whole model is a single BSCC.
In the former case, our approximation within the BSCC is not used at all, in the latter, a sampling based approach needs to invest additional time to discover the whole system.
In order to better compare the performance of our mean payoff based approximation approach, in these cases we pre-explore the whole system and compute the stationary distribution directly through \cref{alg:approx_bscc}.
To compare the combined performance, we additionally consider a handcrafted model, named \textbf{branch}, which comprises both transient states as well as several non-trivial BSCCs.

We present selected results, highlighting different strengths and weaknesses of each approach.
An evaluation of the complete suite can be found in \iftoggle{arxiv}{\cref{app:results}}{\cite[App.~D]{arxiv}}.



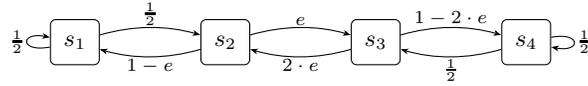
\begin{figure}[t]
	\centering
	\begin{tikzpicture}[auto,xscale=2]
		\node[state] at (0, 0) (s11) {$s_{2}$};
		\node[state] at (-1, 0) (s12) {$s_{1}$};
		\node[state] at (1, 0) (s21) {$s_{3}$};
		\node[state] at (2, 0) (s22) {$s_{4}$};

		\path[probedge]
			(s11) edge[bend left] node[prob] {$1 - e$} (s12)
			(s11) edge[bend left] node[prob] {$e$} (s21)

			(s21) edge[bend left] node[prob] {$1 - 2 \cdot e$} (s22)
			(s21) edge[bend left] node[prob] {$2 \cdot e$} (s11)

			(s12) edge[loop left,looseness=3] node[prob] {$\frac{1}{2}$} (s12)
			(s12) edge[bend left] node[prob] {$\frac{1}{2}$} (s11)

			(s22) edge[loop right,looseness=3] node[prob] {$\frac{1}{2}$} (s22)
			(s22) edge[bend left] node[prob] {$\frac{1}{2}$} (s21)
		;
	\end{tikzpicture}
	\caption{
		A small MC where PRISM reports wrong results for $e \leq 10^{-7}$.
	} \label{fig:power_method_error}
\end{figure}

\paragraph{Correctness}

We discovered that \texttt{PRISM} potentially yields wrong results, due to an unsafe stopping criterion.
In particular, \texttt{PRISM} iterates the power method until the absolute difference between subsequent iterates is small, exactly as with its \enquote{unsafe} value iteration for reachability, as reported by e.g.\ \cite{DBLP:conf/atva/BrazdilCCFKKPU14}.
On the model from \cref{fig:power_method_error}, PRISM (with explicit engine) immediately terminates, printing a result of $\approx (\frac{1}{6}, \frac{1}{6}, \frac{1}{3}, \frac{1}{3})$.
However, the correct stationary distribution is $\approx (\frac{1}{9}, \frac{2}{9}, \frac{4}{9}, \frac{2}{9})$ (from left to right), which both of our methods correctly identify.
This behaviour is due to the small difference between first and second eigenvalue of the transition matrix, which in turn implies that the iterates of the power method only change by a small amount.
We note that on this example, PRISM's default hybrid engine eventually yields the correct result (after $\approx 10^8$ iterations) due to the used iteration scheme.
On small variation of the model (included in the artefact) it also terminates immediately with the wrong result.

\begin{table}[t]
	\caption{
		Overview of our results.
		For each model, we list its parameters, overall size, and number of BSCCs, followed by the total execution time in seconds for each tool, TO denotes a timeout (300 seconds), MO a memout, and \texttt{err} an internal error.
		On systems comprising a single BSCC, the \texttt{Naive} and \texttt{Sample} approach coincide.
	} \label{tbl:results}
	\centering
	\begin{tabular}{rcrrcccc}
		                Model &                    Parameters                     & $\cardinality{\States}$ & $\cardinality{\BsccsOp}$ & \texttt{PRISM} & \texttt{Classic} & \texttt{Naive} & \texttt{Sample} \\
		\midrule
		         \textbf{brp} &                \texttt{N=64,MAX=5}                &                 5{,}192 &                      134 &      1.2       &        11        &       TO       &       4.9       \\
		        \textbf{nand} &                 \texttt{N=15,K=2}                 &                56{,}128 &                       16 &      4.9       &        30        &       TO       &       64        \\
		\textbf{zeroconf\_dl} & \texttt{\tiny reset=false,deadline=40,N=1000,K=1} &               251{,}740 &                 10{,}048 &       99       &       238        &      8.0       &       1.0       \\
		       \textbf{phil4} &                                                   &                 9{,}440 &                        1 &  \texttt{err}  &        TO        &      \multicolumn{2}{c}{51}      \\
		      \textbf{rabin3} &                                                   &                27{,}766 &                        1 &  \texttt{err}  &        MO        &     \multicolumn{2}{c}{178}      \\
		\midrule
		      \textbf{branch} &                                                   &           1{,}087{,}079 &                  1{,}000 &      155       &        TO        &       TO       &       20
	\end{tabular}
\end{table}

\paragraph{Results}
We summarize our results in \cref{tbl:results}.
We observe several points.
First, we see that the naive sampling approach can hardly handle non-trivial models.
Second, our guided sampling approach achieves significant improvements on several models over both the classical, correct method as well as the potentially unsound approach of PRISM, in particular when hardly reachable portions of the state space can be completely discarded.
However, on other models, the classical approach seems to be more appropriate, in particular on models with many likely to be reached BSCCs.
Here, the sampling approach struggles to propagate the reachability bounds of all BSCCs simultaneously.
Finally, as suggested by the \textbf{phil} and \textbf{rabin} models, using mean payoff based approximation can significantly outperform classical equation solving.
In summary, \texttt{PRISM}, \texttt{Classic}, and \texttt{Sample} all can be the fastest method, depending on the structure of the model.
However, recall that \texttt{PRISM}'s method does not give guarantees on the result.

\paragraph{Further Discussion}
As expected, we observed that the runtime of approximation can increase drastically for smaller precision requirements (e.g.\ $\varepsilon = 10^{-8}$) and solving the equation system precisely may actually be faster for some BSCCs.
However, especially in the combined approach, if we already have some upper bounds on the reachability probability of a certain BSCC, we do not need to solve it with the original precision.
Hence, a future version of the implementation could dynamically decide whether to solve a BSCC based on mean payoff approximation or equation solving, combining advantages of both worlds.

Secondly, this also highlights an interesting trade-off implicit to our approach:
The algorithm needs to balance between exploring unknown areas and refining bounds on known BSCCs, in particular, since exploring a new BSCC adds noticeable effort:
One more target for which the reachability has to be determined.
Here, more sophisticated heuristics could be useful.

Finally, for models with large BSCCs, such as \textbf{rabin}, we also observed that the classical linear equation approach indeed runs out of memory while a variant of the approximation algorithm can still solve it, as indicated by \cref{rem:space}.
Thus, the implementation could moreover take memory constraints into account, deciding to apply the memory-saving approach in appropriate cases.
\section{Conclusion}

We presented a new perspective on computing the stationary distribution in Markov chains by rephrasing the problem in terms of mean payoff and reachability.
We combined several recent advances for these problems to obtain a sophisticated partial-exploration based algorithm.
Our evaluation shows that on several models our new approach is significantly more performant.
As a major technical contribution, we provided a general algorithmic framework, which encompasses both the classical solution approach as well as our new method.

As hinted by the discussion above, our framework is quite flexible.
For future work, we particularly want to identify better guidance heuristics.
Specifically, based on experimental data, we conjecture that the reachability part can be improved significantly.
Moreover, due to the flexibility of our framework, we can apply different methods for each BSCC to obtain the reachability and stationary distribution.
Thus, we want to find meta-heuristics which suggest the most appropriate method in each case.
For example, for smaller BSCCs, we could use the classical, precise solution method to obtain the stationary distribution, while for larger ones we employ our mean payoff approach, and, in the spirit of \cref{rem:space}, for even larger ones we approximate them to the required precision immediately, saving memory.
Additionally, we could identify BSCCs that satisfy the conditions of specialized approaches such as \cite{DBLP:conf/asmta/FourneauQ12}.

\vfill

\bibliographystyle{splncs04}
\bibliography{main}

\iftoggle{arxiv}{
\clearpage
\appendix
\section{Further Details}

\subsection{Periodicity and Limiting vs.\ Stationary} \label{app:periodic}

\begin{figure}[t]
	\centering
	\begin{tikzpicture}[auto,initial text=,xscale=2]
		\node[state] at (0, 0) (s) {$s_1$};
		\node[state] at (1, 0) (t) {$s_2$};

		\path[probedge]
			(s) edge[bend left] node[prob] {$1.0$} (t)
			(t) edge[bend left] node[prob] {$1.0$} (s)
		;
	\end{tikzpicture}
	\caption{
		Example MC to demonstrate periodicity.
	} \label{fig:periodic}
\end{figure}
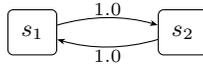

For a Markov chain $\MC$, let $\delta^n(s, s') = \ProbabilityMC<\MC, s>[\stepreach<n> \{s'\}]$ the probability to go from $s$ to $s'$ in exactly $n$ steps.
The period $p(s)$ of a state $s$ is the greatest common divisor of all $n > 0$ with $\delta^n(s, s) > 0$.
The state $s$ is called \emph{periodic} if $p(s) > 0$ and \emph{aperiodic} otherwise.
A Markov chain is aperiodic if all of its states are aperiodic.
Intuitively, in a periodic chain, many classical limits do not converge and one would instead need to consider, e.g., the Ceasaro limit.
For example, consider the MC in \cref{fig:periodic}, which is also discussed in \cite[Sec.~A.4]{DBLP:books/wi/Puterman94}.
Both states have a period of $2$ and thus the MC is periodic.
We have that $P = \begin{psmallmatrix} 0 & 1 \\ 1 & 0 \end{psmallmatrix}$.
Clearly, $P^{2n} = I$ and $P^{2n+1} = P$, and in particular $\lim_{n \to \infty} P^n$ does not exist.
In this case, the \emph{limiting distribution} starting from some initial state $\initialstate$ given as $\lim_{n \to \infty} \pi_{\MC, \initialstate}^n(s)$ is not defined, whereas the \emph{stationary distribution} as given by \cref{def:stationary} equals $(\frac{1}{2}, \frac{1}{2})$.
Citing Feller (taken from \cite[p.~337]{DBLP:books/wi/Puterman94})
\begin{quote}
	The modifications required for periodic chains are rather trite, but the formulations required become unpleasantly involved.
\end{quote}
Fortunately, in our case we can w.l.o.g.\ assume that Markov chains are aperiodic by applying a simple transformation (see e.g.\ \cite[Sect.~8.5.4]{DBLP:books/wi/Puterman94} or \cite{DBLP:conf/cav/AshokCDKM17}).
Intuitively, we simply add a self-loop with probability $0 < \alpha < 1$ to every state and rescale all remaining transitions by $(1 - \alpha)$.
In other words, we obtain a new transition matrix $P_\alpha = \alpha I + (1 - \alpha) P$.
By determining the stationary distribution $\pi_\alpha$ for this transformed Markov chain, we can directly obtain the original stationary distributions as follows:
Recall that $\pi_\alpha = P_\alpha \pi_\alpha = \alpha \pi_\alpha + (1 - \alpha) P \pi_\alpha$.
Rearranging yields that $(1 - \alpha) \pi_\alpha = (1 - \alpha) P \pi_\alpha$, i.e.\ $\pi_\alpha$ is the stationary distribution of the original, periodic Markov chain, however obtained by dealing with an aperiodic chain.
In particular, since the modified Markov chain is aperiodic, the power method also converges.


\subsection{Value Iteration} \label{app:value_iteration}

In this section, we provide some further insights in value iteration and some of its classical uses.
For the sake of illustration, let us consider reachability.
As mentioned in \cref{eq:value_iteration}, VI starts from an initial value vector $v_1[s] = 0$, and we apply the iteration
\begin{equation*}
	v_{k+1}[s] = 1 \text{ if $s \in \Reachset$, } \quad 0 \text{ if $s \in S_0$, and } \quad \ExpectedSumMC{\mctransitions}{s}{v_k} \text{ otherwise}.
\end{equation*}
This iteration is monotone and converges to the true value in the limit from below \cite[Thm.~10.15]{DBLP:books/daglib/0020348}, \cite[Thm.~7.2.12]{DBLP:books/wi/Puterman94}.
There exist MC where convergence up to a given precision takes exponential time \cite[Thm.~3]{DBLP:journals/tcs/HaddadM18}, but in practice VI often is much faster than methods based on equation solving.

We are still missing an important ingredient for a practical implementation of VI.
So far, we only know that \emph{eventually} the value vector $v_k$ is close to the optimum, but we do not have a concrete (practical) bound.
For (unbounded) reachability, the a-priori bound is exponential, yet typically convergence only takes a much smaller number of steps.
Thus, we want to know immediately when the currently computed values actually are close to the true value, allowing to stop the iteration early.
This idea is formalized by a so called \emph{stopping criterion}, a method to decide whether the computation has converged.
Surprisingly, even for reachability such a stopping criterion was not known until a few years ago and all model checking implementations resorted to a best-effort solution without any actual guarantees \cite[Sec.~3.1]{DBLP:journals/tcs/HaddadM18}, \cite{DBLP:conf/cav/Baier0L0W17}.
In \cite{DBLP:conf/atva/BrazdilCCFKKPU14,DBLP:journals/tcs/HaddadM18}, a stopping criterion for reachability was independently discovered by additionally computing converging \emph{upper} bounds.\footnote{
	These works focus on Markov decision processes, i.e.\ MC with non-determinism.
	There, computing upper bounds is more involved, however for MC we can simply start with $v_1(s) = 1$ for all $s \notin S_0$ and apply the iteration from \cref{eq:value_iteration}.
}
The difference between upper and lower bounds then gives a straightforward stopping criterion:
Once the difference between upper and lower bound in the initial state is smaller than $\varepsilon$, we can stop the iteration.

A big advantage of VI is its simplicity and extendability.
For example, the iteration for reachability can be applied \emph{asynchronously}.
Here, we do not update the values of all states simultaneously.
Instead, we apply the operator of \cref{eq:value_iteration} to a subset of states.
We can thus focus the computational effort on important areas of the system instead of applying the iteration globally.
Convergence guarantees can be obtained by a simple fairness constraint.

Together, these features of VI make it a perfect candidate for heuristic-based techniques:
We repeatedly apply the value iteration operation on heuristically selected regions until the stopping criterion is satisfied.

\subsection{Mean Payoff} \label{app:mean_payoff}

We briefly outline how mean payoff can be determined using value iteration.
Recall that we iteratively compute the \emph{expected total reward}, i.e.\ the expected sum of rewards we obtain after $n$ steps.
Formally, for $v_1 = \vec{0} \in \Reals^{\cardinality{\States}}$, we iterate $v_{n+1}(s) = \reward(s) + \sum_{s' \in \States} \mctransitions(s, s') \cdot v_n(s') = \reward(s) + \ExpectedSumMC{\mctransitions}{s}{v_n}$.
In case the Markov chain is \emph{aperiodic} (which we assume w.l.o.g., see \cref{app:periodic}), the \emph{increase} $\Delta_n(s) = v_{n+1}(s) - v_n(s)$ approximates the mean payoff, i.e.\ $\meanpayoff_\reward(s) = \lim_{n \to \infty} \Delta_n(s)$ \cite[Thm.~9.4.5 a)]{DBLP:books/wi/Puterman94}.
So, intuitively, the mean payoff is the reward we can expect to obtain on average in one step after running for a long time.
(Indeed, $\frac{1}{n} v_n$ also approximates the mean payoff, even on periodic chains, however we do not have a stopping criterion for this view.)

Surprisingly, this increase even directly yields correct bounds on the mean payoff:
We have $\min_{s' \in S} \Delta_n(s') \leq \meanpayoff_\reward(s) \leq \max_{s' \in S} \Delta_n(s')$, and thus a stopping criterion \cite[Thm.~9.4.5 b)]{DBLP:books/wi/Puterman94}.
These bounds naturally will converge if two states have different mean payoff, i.e.\ $\meanpayoff_\reward(s) \neq \meanpayoff_\reward(s')$ for any $s, s' \in \States$.
The converse holds, too: If all states have the same mean payoff, these bounds do eventually converge (again in the case of aperiodicity), yielding a complete approximation scheme \cite[Cor.~9.4.6]{DBLP:books/wi/Puterman94}.
Whenever the set of states $\States$ is a single BSCC, all states have the same value.

We highlight that for the mean payoff computation the choice of the initial vector $v_1$ actually is arbitrary, hence the computation can be (i)~paused and restarted at any time by simply taking the old iteration value as \enquote{new} $v_1$ and (ii)~initialized with heuristic guesses.
As with VI for reachability, this iteration may take an exponential number of steps to reach a precision of $\varepsilon$ in the worst case, but typically is much faster (see e.g.\ \cite{DBLP:conf/cav/AshokCDKM17}).

\section{Proofs}

\subsection{Proof of \cref{stm:framework_correct}} \label{app:proof:framework_correct}

\begin{proof}
	In order to prove correctness, we show
		(I)~that $l^{\reach R}_n(s) \leq \ProbabilityMC<\MC, s>[\reach R] \leq u^{\reach R}_n(s)$ for every state $s$, step $n$, and BSCC $R \in \mathcal{B}_n$,
		(II)~that $l^R_n(s) \leq \steadystate<R>(s) \leq u^R_n(s)$ for every step $n$, explored BSCC $R \in \mathcal{B}_n$ and state $s \in R$, and, using these claims,
		(III)~that the returned bounds $(l, u)$ are correct and $\varepsilon$-precise.
	Note that we do not prove termination here, instead the main goal is to show that the stopping condition implies $\varepsilon$-correctness of the result.

	We prove claims (I) and (II) directly by checking each update of all bound-variables.
	We first consider the reachability bounds $l^{\reach R}_n$ and $u^{\reach R}_n$.
	The bounds are initialized to trivial values in \cref{line:framework:init}.
	For states in BSCCs, they are updated in \cref{line:framework:bscc_explore_reach_lower_bounds,line:framework:bscc_explore_reach_upper_bounds}, which is correct due to \cref{stm:mc_bscc_properties}.
	Finally, the updates in \cref{line:framework:bssc_update_bounds} are correct by assumption.
	Similarly, for the bounds on the stationary distribution $l^R_n$ and $u^R_n$, the claim immediately follows from the initialization in \cref{line:framework:init} and the correct-by-assumption update in \cref{line:framework:bssc_update_bounds}.


	For claim (III), we show that for every state $s$ we have $u(s) - l(s) \leq \varepsilon$ and $l(s) \leq \steadystate<\MC, \initialstate>(s) \leq u(s)$.
	Thus, fix an arbitrary $s \in \States$.
	If $s$ is not in any BSCC, the claim holds directly: As \Call{UpdateBSCCs}{} always yields correct BSCCs by assumption, we surely have that $s \in \States \setminus B_n$, and thus $l(s) = u(s) = 0$ by \cref{line:framework:result_zero_states}.
	Now, if $s$ is in a BSCC $R \in \Bsccs(\MC)$, we distinguish two cases.
	If $R \notin \mathcal{B}_n$, we also have that $l(s) = u(s) = 0$ by \cref{line:framework:result_zero_states}.
	We show that $\ProbabilityMC<\MC, \initialstate>[\reach R] \leq \varepsilon$, and thus, by \cref{eq:steady_state_decomposition}, $\steadystate<\MC, \initialstate>(s) \leq \varepsilon$.
	Recall that $\sum_{R \in \Bsccs(\MC)} \ProbabilityMC<\MC, \initialstate>[\reach R] = 1$ by \cref{stm:mc_bscc_properties}.
	Since furthermore the reachability bounds are correct by (I), observe that
	\begin{equation*}
		\ProbabilityMC<\MC, \initialstate>[\reach R] \leq {\sum}_{R' \in \Bsccs(\MC) \setminus \mathcal{B}_n} \ProbabilityMC<\MC, \initialstate>[\reach R'] \leq (1 - {\sum}_{R' \in \mathcal{B}_n} l^{\reach R'}_n(\initialstate)) \leq \varepsilon,
	\end{equation*}
	For the other case of $R \in \mathcal{B}_n$, observe that, since $1 - {\sum}_{R' \in \mathcal{B}_n} l^{\reach R'}_n(\initialstate) \leq \varepsilon$, we have $L = {\sum}_{R' \in \mathcal{B}_n} l^{\reach R'}_n(\initialstate) \geq 1 - \varepsilon$.
	Consequently $1 - L + l^{\reach R}_n(\initialstate) \leq l^{\reach R}_n(\initialstate) + \varepsilon$ is a correct upper bound of $\ProbabilityMC<\MC, \initialstate>[\reach R]$.
	Together, we obtain
	\begin{align*}
		u(s) - l(s) & = \min(u^{\reach R}_n(\initialstate), l^{\reach R}_n(\initialstate) + 1 - L) \cdot u^R_n(s) - l^{\reach R}_n(\initialstate) \cdot l^R_n(s) \\
		            & \leq (l^{\reach R}_n(\initialstate) + 1 - L) \cdot u^R_n(s) - l^{\reach R}_n(\initialstate) \cdot l^R_n(s)                                 \\
		            & = l^{\reach R}_n(\initialstate) \cdot (u^R_n(s) - l^R_n(s)) + (1 - L) \cdot u^R_n(s)                                                       \\
		            & \leq l^{\reach R}_n(\initialstate) \cdot (u^R_n(s) - l^R_n(s)) + 1 - L                                                                     \\
		            & \leq l^{\reach R}_n(\initialstate) \cdot {\max}_{s' \in R} (u^R_n(s') - l^R_n(s')) + 1 - L                                                 \\
		            & \leq \varepsilon,
	\end{align*}
	where the last inequality directly follows from the termination condition of \cref{line:framework:loop} and the definition of $L$ in \cref{line:framework:result_lower_bounds_sum}.
	This concludes the proof. \qed
\end{proof}

\subsection{General Termination Criterion} \label{app:proof:framework_termination}
\begin{theorem} \label{stm:framework_termination}
	\Cref{alg:framework} terminates if additionally to the assumptions of \cref{stm:framework_correct} we have that all calls to all template functions terminate and
	\begin{enumerate}[label=(T.\roman*),labelsep=*,leftmargin=*]
		\item \label{stm:framwork_termination:bsccs}
		\Call{UpdateBSCCs}{} eventually identifies all relevant BSCCs, i.e.\ eventually $\sum_{R \in \mathcal{B}_n} \ProbabilityMC<\MC, \initialstate>[\reach R] \geq 1 - \frac{\varepsilon}{4}$,
		\item \label{stm:framwork_termination:select_dist}
		\Call{SelectDistributionUpdates}{} repeatedly selects each explored BSCC $R \in \mathcal{B}_n$ with $u^{\reach R}_n(\initialstate) \cdot \max_{s \in R} (u^R_n(s) - l^R_n(s)) \geq \frac{\varepsilon}{4 (\cardinality{\mathcal{B}_n} + 1)}$,
		\item \label{stm:framwork_termination:refine_dist}
		for each repeatedly selected BSCC $R$, \Call{RefineDistribution}{} yields arbitrarily precise solutions in the limit,
		\item \label{stm:framwork_termination:select_reach}
		\Call{SelectReachUpdates}{} repeatedly selects each explored BSCC $R \in \mathcal{B}_n$ with $u^{\reach R}(\initialstate) - l^{\reach R}(\initialstate) \geq \frac{\varepsilon}{4 (\cardinality{\mathcal{B}_n} + 1)}$, and
		\item \label{stm:framwork_termination:update_reach}
		for every repeatedly selected BSCC $R$, \Call{RefineReach}{} yields arbitrarily precise solutions in the limit for $\initialstate$.
	\end{enumerate}
	The algorithm terminates with probability 1 if all assumptions hold almost surely.\footnote{This is relevant for sampling based approaches: In general, these only explore the whole state-space with probability 1.}
\end{theorem}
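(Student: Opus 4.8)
The plan is to argue by contradiction. Since every call to a template function terminates by hypothesis and each pass of the loop body performs only finitely many such calls together with finite \texttt{for}-loops over $\States$ and over $\mathcal{B}_n$, the algorithm fails to terminate only if the \texttt{while} loop of \cref{line:framework:loop} executes infinitely often. I would assume this and show that the loop guard eventually drops to a value below $\varepsilon$, contradicting that the loop keeps running.

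First I would record two structural facts. By assumption~(i) of \cref{stm:framework_correct} the sets $\mathcal{B}_n$ are non-decreasing and contained in the finite set $\Bsccs(\MC)$, hence they stabilise: there are $n_0$ and $\mathcal{B}^\star$ with $\mathcal{B}_n = \mathcal{B}^\star$ for all $n \geq n_0$. Put $K = \cardinality{\mathcal{B}^\star}$, so that the threshold $\tfrac{\varepsilon}{4(\cardinality{\mathcal{B}_n}+1)}$ equals the fixed value $\theta \coloneqq \tfrac{\varepsilon}{4(K+1)}$ for all $n \geq n_0$. Next, by assumption~\ref{stm:framwork_termination:bsccs} there is $n_1 \geq n_0$ with $\sum_{R \in \mathcal{B}^\star} \ProbabilityMC<\MC, \initialstate>[\reach R] \geq 1 - \tfrac{\varepsilon}{4}$ from $n_1$ on.

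Then I would show that for every $R \in \mathcal{B}^\star$ both error quantities appearing in the guard eventually fall below $\theta$ and stay there. Consider $\errorbound^{\reach R}_n(\initialstate) = u^{\reach R}_n(\initialstate) - l^{\reach R}_n(\initialstate)$; it is non-increasing since the bounds are monotone (assumption~(ii) of \cref{stm:framework_correct}). Were $\errorbound^{\reach R}_n(\initialstate) \geq \theta$ to hold for all $n$, then by assumption~\ref{stm:framwork_termination:select_reach} \Call{SelectReachUpdates}{} would select $R$, and hence \Call{RefineReach}{} would be applied to $R$, infinitely often, so by assumption~\ref{stm:framwork_termination:update_reach} the bounds $l^{\reach R}_n(\initialstate)$ and $u^{\reach R}_n(\initialstate)$ would both converge to $\ProbabilityMC<\MC, \initialstate>[\reach R]$, contradicting $\errorbound^{\reach R}_n(\initialstate) \geq \theta$. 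Hence $\errorbound^{\reach R}_n(\initialstate) < \theta$ from some index on, and monotonicity keeps it below. The same argument applied to $u^{\reach R}_n(\initialstate)\cdot\errorbound^R_n$, which is non-increasing and, as $u^{\reach R}_n(\initialstate) \leq 1$, bounded above by $\errorbound^R_n$, using assumptions~\ref{stm:framwork_termination:select_dist} and~\ref{stm:framwork_termination:refine_dist}, yields $u^{\reach R}_n(\initialstate)\cdot\errorbound^R_n < \theta$ from some index on. As $\mathcal{B}^\star$ is finite, there is a single $n^\star \geq n_1$ past which all these inequalities hold simultaneously.

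Finally I would plug these estimates into the guard of \cref{line:framework:loop}. For $n \geq n^\star$ and $R \in \mathcal{B}^\star$ we have $\ProbabilityMC<\MC, \initialstate>[\reach R] \leq u^{\reach R}_n(\initialstate) < l^{\reach R}_n(\initialstate) + \theta$, so $\sum_{R \in \mathcal{B}^\star} l^{\reach R}_n(\initialstate) > (1 - \tfrac{\varepsilon}{4}) - K\theta > 1 - \tfrac{\varepsilon}{2}$ (using $K\theta < \tfrac{\varepsilon}{4}$), whence the first summand of the guard is below $\tfrac{\varepsilon}{2}$; and the second summand satisfies $\sum_{R \in \mathcal{B}^\star} l^{\reach R}_n(\initialstate)\cdot\errorbound^R_n \leq \sum_{R \in \mathcal{B}^\star} u^{\reach R}_n(\initialstate)\cdot\errorbound^R_n < K\theta < \tfrac{\varepsilon}{4}$. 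Thus the guard value is below $\tfrac{3\varepsilon}{4} < \varepsilon$ for all $n \geq n^\star$, so the loop would have exited --- contradiction. The probability-$1$ statement follows verbatim, all of the above taking place on the event, of probability $1$ by hypothesis, on which assumptions~\ref{stm:framwork_termination:bsccs}--\ref{stm:framwork_termination:update_reach} hold. I expect the main obstacle to be purely organisational: handling the $n$-dependent threshold $\tfrac{\varepsilon}{4(\cardinality{\mathcal{B}_n}+1)}$ and turning \enquote{selected repeatedly unless the algorithm terminates} together with \enquote{arbitrarily precise in the limit} into \enquote{eventually and permanently below $\theta$} --- which is exactly what the stabilisation of $\mathcal{B}_n$ and the monotonicity of the bounds buy us.
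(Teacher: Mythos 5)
Your proposal is correct and follows essentially the same route as the paper's proof: contradiction from non-termination, stabilisation of $\mathcal{B}_n$ to a finite set so the threshold becomes a fixed $\theta$, then using the select/refine assumptions to drive both the reachability errors and the weighted distribution errors below $\theta$, and finally the same arithmetic $\sum_R l^{\reach R}_n(\initialstate) > 1 - \tfrac{\varepsilon}{4} - K\theta \geq 1 - \tfrac{\varepsilon}{2}$ to falsify the loop guard. If anything, you are slightly more explicit than the paper in invoking monotonicity of the bounds to turn \enquote{precise at some step} into \enquote{permanently precise}, which is a welcome touch of rigour rather than a deviation.
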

\begin{remark}
	In contrast to the proofs of e.g.\ \cite{DBLP:journals/lmcs/KretinskyM20}, we do not simply assume that, for example, \enquote{eventually all BSCCs are explored and all BSCCs are selected}, since this would restrict the algorithm.
	With the current formulation, we can completely stop exploring once enough BSCCs have been discovered and still obtain termination guarantees.
\end{remark}
\begin{proof}
	Suppose the algorithm does not terminate.
	Since all individual calls to the functions terminate by assumptions, this means that the loop condition of \cref{line:framework:loop} is always satisfied.
	We derive a contradiction.

	First, as we assume that the set $\mathcal{B}_n$ increases monotonically and $\Bsccs(\MC)$ is finite, this set necessarily eventually stabilizes.
	Let $\mathcal{B}$ denote this \enquote{stable} set and $n_0$ the first step with $\mathcal{B}_{n_0} = \mathcal{B}$.
	Due to \ref{stm:framwork_termination:bsccs}, $\mathcal{B}$ contains all relevant BSCCs.
	In the following, we assume that all $n \geq n_0$ and thus $\mathcal{B}_n = \mathcal{B}$.

	We prove that both parts of the loop criterion eventually are smaller than $\frac{\varepsilon}{2}$, contradicting the assumption.
	For the first part, \ref{stm:framwork_termination:select_reach} together with \ref{stm:framwork_termination:update_reach} yields that for each BSCC in $R \in \mathcal{B}$, the reachability bounds eventually are $\varepsilon^\reach = \frac{\varepsilon}{4 (\cardinality{\mathcal{B}_n} + 1)}$ precise in $\initialstate$:
	Assume that the bounds for some BSCC $R \in \mathcal{B}_n$ are more than $\varepsilon^\reach$ apart.
	Then, \ref{stm:framwork_termination:select_reach} is applicable and $R$ is selected infinitely often.
	By \ref{stm:framwork_termination:update_reach}, the bounds eventually are arbitrarily close to the correct value, thus there exists a step $n$ where they are $\varepsilon^\reach$ precise.
	Consequently,
	\begin{align*}
		{\sum}_{R \in \mathcal{B}_n} l^{\reach R}_n(\initialstate) & \geq {\sum}_{R \in \mathcal{B}_n} (\ProbabilityMC<\MC, \initialstate>[\reach R] - \varepsilon^\reach)                              \\
		                                                           & > {\sum}_{R \in \mathcal{B}_n} \ProbabilityMC<\MC, \initialstate>[\reach R] - \cardinality{\mathcal{B}_n} \cdot \varepsilon^\reach \\
		                                                           & \geq 1 - \frac{\varepsilon}{4} - \frac{\varepsilon}{4} \geq 1 - \frac{\varepsilon}{2},
	\end{align*}
	where the penultimate step follows from \ref{stm:framwork_termination:bsccs}, showing the first part.
	The second part follows from \ref{stm:framwork_termination:select_dist} and \ref{stm:framwork_termination:update_reach}, observing that $u^{\reach R}_n(s) \leq 1$.

	The proof for almost sure termination is analogous. \qed
\end{proof}

\subsection{Proof of \cref{stm:sampling}} \label{app:proof:sampling}

\begin{proof}
	For correctness, observe that our choice of functions directly satisfy the assumptions of \cref{stm:framework_correct}, in particular the refinement approaches yield correct and monotone results as explained in the previous section.

	To show that \cref{stm:framework_termination} is applicable, we prove all conditions separately.
	First, observe that all functions terminate, in particular sampling terminates due to pigeon-hole principle:
	After sampling long enough, we eventually have to see states repeatedly, since $\cardinality{\States} < \infty$.

	Since the system eventually ends up in a BSCC with probability 1, the set $P$ necessarily has to contain some set of BSCCs $\mathcal{B} \subseteq \Bsccs(\MC)$ with $\ProbabilityMC<\MC, \initialstate>[\reach \Union_{R \in \mathcal{B}} R] \geq 1 - \frac{\varepsilon}{4}$ due to \ref{stm:sampling:core}.
	As \Call{UpdateBSSCs}{} eventually finds all BSCCs in $P$, we eventually have $\mathcal{B}_n = \mathcal{B}$, satisfying \ref{stm:framwork_termination:bsccs}.

	Next, observe that since \Call{RefineReach}{} is back-propagating values, we always have that $u^{\reach R}_n(s) \leq \ExpectedSumMC{\mctransitions}{s}{u^{\reach R}_n}$ and dually $l^{\reach R}_n(s) \geq \ExpectedSumMC{\mctransitions}{s}{u^{\reach R}_n}$:
	The only way upper bounds can decrease in a state $s \notin B_n$ (or lower bounds increase) is by updating with the expected sum.
	Consequently, for any state $s$ there necessarily has to exist at least one successor of $s'$ with $E_n^B(s) \leq E_n^B(s')$.
	Thus, if for any BSCC $R$ the condition \ref{stm:framwork_termination:select_dist} is applicable, there is a path from $\initialstate$ to $R$ with $E^B_n(s) \geq \frac{\varepsilon}{4 (\cardinality{B_n} + 1)}$ for all states $s$ along this path, and, by \ref{stm:sampling:initial} and \ref{stm:sampling:successor}, we have that the BSCC $R$ is sampled (and thus updated) infinitely often, proving \ref{stm:framwork_termination:select_dist}.
	Next, \ref{stm:framwork_termination:refine_dist} (convergence of distribution refinement) follows from the observations on \cref{alg:approx_bscc}.
	Condition~\ref{stm:framwork_termination:select_reach} follows by an analogous argument:
	If the imprecision in any BSCC is too large, there exists a path from the initial state for which \ref{stm:sampling:successor} is applicable.
	Finally, \ref{stm:framwork_termination:update_reach} follows from the fact that back-propagation of lower and upper bounds converges, see e.g.\ \cite{DBLP:journals/tcs/HaddadM18}, or, more directly, observe that the reachability value is the unique fixed point of \cref{eq:reachability_value_fixpoint}. \qed
\end{proof}

\section{Complete Instantiation} \label{app:algorithm}

We discuss our choice of sampling heuristic and present a complete instantiation of \cref{alg:framework}.
\subsection{Challenges for Sampling Heuristics}
Recall that in each iteration, we want to sample a set of states and base all functions on this sampled set.
We have several options to obtain samples.
We start at a state $s$ (not necessarily the initial state) and repeatedly select a successor, obtaining a path in the system.
We continue this process until we (i)~encounter a known BSCC or (ii)~sample already sampled states too often (this happens when we encounter an unexplored BSCC).

A naive choice would be to sample according to the transition dynamics of the Markov chain, i.e.\ select a successor based on $\mctransitions(s)$.
This eventually samples all states, and every state is sampled arbitrarily often, directly satisfying the conditions of \cref{stm:sampling}.
However, we aim for a more sophisticated approach.
Indeed, as reported in \cite{DBLP:conf/atva/BrazdilCCFKKPU14} and confirmed by subsequent works, guiding the sampling by a heuristic increases the performance drastically.
The idea is to steer the sampling towards regions which (i)~have a high influence on the result and (ii)~currently are not solved precisely.
Since these previous works compute a single value (e.g.\ reachability), they used the difference between lower and upper bounds as guidance.
In other words, they sample a successor proportional to $\mctransitions(s, s') \cdot e(s')$, where $e(s')$ denotes the error bound.
This steers the computation towards regions which both are reasonably likely to be reached (due to weighing with the transition probability) as well as with a significant uncertainty about the result (due to weighing with $e$).
As such, focussing computation on these areas should improve the result quickly.

Unfortunately, the situation is not as simple in our case, since we have several bounds to worry about:
For each BSCC, we need to compute its distribution and its reachability, thus instead of a single bound, we have 2 for each found BSCC.
Moreover, for each BSCC, the error bound of the distribution as well as the reachability may converge at completely different rates.
So, even if one of them is solved precisely, we still have to focus some effort on this BSCC.
In particular, considering e.g.\ $(u^{\reach R}_n(s) - l^{\reach R}_n(s)) \cdot \max_{s' \in R} (u^R_n(s) - l^R_n(s))$ is wrong, since we may be able to solve the stationary distribution of $R$ precisely long before the reachability bounds have converged -- the weight would be zero even though there is still computation to be performed.
However, considering $(u^{\reach R}_n(s) - l^{\reach R}_n(s)) + \max_{s' \in R} (u^R_n(s) - l^R_n(s))$, i.e.\ adding the errors instead of multiplying, does not reflect our goals:
Even BSCCs which provably are hardly reached and thus have little impact on the result would be selected very often.
As such, designing an efficient guidance heuristic seems to be significantly more intricate than in previous works.
In particular, note that the heuristic should additionally be easy to evaluate, otherwise we waste more time on computing the heuristic than we save in the end by focussing on the right areas.

%

\subsection{Our Sampling Mechanism}
%
We propose to randomly select such a \enquote{target} before each sampling run, which gives us a fixed guidance heuristic.
In particular, we select a BSCC proportionally to $\errorbound^{\reach R}_n(\initialstate) + u^{\reach R}_n(\initialstate) \cdot \errorbound_n^R$, i.e.\ the sum of reachability error and distribution error weighted by the upper bound on reachability.
Note that $\errorbound^{\reach R}_n(\initialstate) + \errorbound_n^R = 0$ only if we precisely determined both the probability of reaching $R$ from $\initialstate$ as well as the stationary distribution of $R$.
Otherwise, there always is the chance of selecting $R$ and reaching it. 
Hence, \cref{stm:sampling}~\ref{stm:sampling:successor} is applicable.

This does not necessarily lead us to explore new BSCCs, i.e.\ \ref{stm:sampling:core} is not satisfied.
Thus, we also explore \enquote{the unknown}, which is another type of sampling target.
Here, we make use of the approach of \cite{DBLP:journals/lmcs/KretinskyM20}, which we summarize briefly.
We introduce the bound $\errorbound^{\reach ?}_n(s)$, which represents an upper bound on the probability to reach an unexplored state.
Initially, we set $\errorbound^{\reach ?}_1(s) = 1$ for all states.
Whenever we explore a BSCC, we set $\errorbound^{\reach ?}_n(s) = 0$ for all states in the BSCC.
For all other states, we simply back-propagate $\errorbound^{\reach ?}_{n+1}(s) = \ExpectedSumMC{\mctransitions}{s}{\errorbound^{\reach ?}_n}$.
We select the \enquote{explore} goal with probability proportional to $\errorbound^{\reach ?}_n(\initialstate)$ and sample proportional to $\mctransitions(s, s') \cdot \errorbound^{\reach ?}_n(s')$.
From \cite[Sec.~3.2]{DBLP:journals/lmcs/KretinskyM20}, we know that $\errorbound^\reach_n$ is an upper bound on the probability to reach unexplored states and sampling weighted by $\errorbound^{\reach ?}_n$ eventually leads us to explore the whole system.
In particular, $\errorbound^\reach_n(\initialstate) = 0$ implies that all states (and BSCCs) have been discovered, i.e.\ $P = \States$ and $\mathcal{B}_n = \Bsccs(\MC)$, satisfying \ref{stm:sampling:core}.

\subsection{Complete Instantiation}
For readability, we omit step indices of the variables and assume that all variables are initialized to sensible values on their first read.
We also omit some technicalities that arise if, for example, $\initialstate$ is in a BSCC.

\begin{algorithm}[h]
	\caption{Concrete Instantiation of our Framework}
	\begin{algorithmic}[1]
		\Require Markov chain $\MC = (\States, \mctransitions)$, initial state $\initialstate$, precision $\varepsilon > 0$
		\Ensure $\varepsilon$-precise bounds $l, u$ on the stationary distribution $\steadystate<\MC, \initialstate>$

		\State $P \gets \{\initialstate\}$
		\While {$\big( 1 - \sum_{R \in \mathcal{B}} l^{\reach R}(\initialstate) \big) + \sum_{R \in \mathcal{B}} \big( l^{\reach R}(\initialstate) \cdot \errorbound^R \big) > \varepsilon$}
			\State $T \gets \Call{SelectTarget}{}$ \Comment{Select the guidance target}
			\If {$T = \text{reach unknown}$} $f \gets s' \mapsto \delta(s, s') \cdot \errorbound^{\reach ?}(s')$
			\ElsIf {$T = \text{reach $R$}$} $f \gets s' \mapsto \delta(s, s') \cdot u^{\reach R}(s')$
			\EndIf
			\State $\rho \gets ()$, $s \gets \initialstate$
			\While {$s \notin B$ and $s \notin \rho$} \Comment{Sample path $\rho$ through $\MC$}
				\State $s \gets \Call{SampleWeighted}{s' \mapsto \delta(s, s') \cdot f(s')}$
				\If {$s$ is \texttt{null}} \textbf{break} \Comment{For example, if all successors have $f(s') = 0$}
				\EndIf
				\State $\rho \gets \rho \circ s$
			\EndWhile
			\State $P \gets P \union \rho$ \Comment{Mark all states of $\rho$ as explored}
			\If {$s \in R$ for some $R \in \mathcal{B}$} \Comment{Update reached BSCC if any}
				\State Update $l^R$ and $u^R$ according to \cref{alg:approx_bscc} (or other method)
			\Else
				\State $\mathcal{B} \gets \Call{UpdateBSSCs}{P}$, $B \gets \Union_{R \in \mathcal{B}} R$ \Comment{Discover new BSCCs}
				\For {newly discovered $R$, $s \in R$} \Comment{Update trivial bounds}
					\State $l^{\reach R}(s) \gets 1$
					\For {$\circ \neq R$} \Comment{Probability to reach others is zero (even unexplored ones)}
						\State $u^{\reach \circ}(s) \gets 0$
					\EndFor
					\State $\errorbound^{\reach ?}(s) \gets 0$
					\If {$\cardinality{R} = 1$} $l^R(s) \gets 1$
					\EndIf
				\EndFor
			\EndIf
			\For {$s \in \rho$ in reverse order} \Comment{Update reachability}
				\State $l^{\reach R}(s) \gets \ExpectedSumMC{\mctransitions}{s}{l^{\reach R}}$
				\State $u^{\reach R}(s) \gets \ExpectedSumMC{\mctransitions}{s}{u^{\reach R}}$
				\State $\errorbound^{\reach ?}(s) \gets \ExpectedSumMC{\mctransitions}{s}{\errorbound^{\reach ?}}$
			\EndFor
		\EndWhile
		\State \Return $(l, u)$ as described in \cref{alg:framework}
	\end{algorithmic}
\end{algorithm}

Observe that updating the reachability values in backwards order does not immediately instantiate \cref{alg:framework}:
There, all values of the sampled states are updated simultaneously.
However, recall that \cref{alg:framework} does not require that the set of sampled states is a path or that it contains $\initialstate$.
Thus, we can instantiate \cref{alg:framework} by sampling a path first and then select one state at a time in reverse order to update.

\pagebreak

\section{Complete Evaluation} \label{app:results}

In this section, we give all results of our evaluation.
We use the following notation:
Each table begins with the name of the model and the set constants, if any.
We then report the number of states in the model and, in the case of models with multiple BSCCs, the number of these.
Then, we list for each applicable tool the time until termination in seconds.
We write \enquote{T/O} for a timeout, \enquote{M/O} for memout, \enquote{SO} for stack overflow error (PRISM's implementation of Tarjan's algorithm is recursive and fails on larger models), and \enquote{Conv} for a convergence error (i.e.\ when PRISM did not converge in $10^{7}$ iterations).
We ordered the models by size and when a tool fails we skip evaluation for the larger ones, denoted by \enquote{-}.
For readability, we omit all models where all tools are skipped.

The used tools are:
\begin{itemize}
	\item \texttt{Sample}: guided sampling and mean payoff based approximation in BSCCs,
	\item \texttt{Naive}: unguided sampling and mean payoff based approximation in BSCCs,
	\item \texttt{Solve}: guided sampling and linear equation solving in BSCCs,
	\item \texttt{PRISM-H}: PRISM with hybrid engine, and
	\item \texttt{PRISM-E}: PRISM with explicit engine.
\end{itemize}
Note that on models comprising a single BSCC, the methods \texttt{Sample} and \texttt{Naive} coincide.
Moreover, on MDP models, \texttt{PRISM-H} is not applicable for technical reasons (we cannot specify a uniform strategy easily).
Recall that \texttt{PRISM}'s results are not guaranteed to be correct, thus \texttt{Classic} should be used as baseline for comparison of our methods.

We first give a quick overview of the results in \cref{fig:scatter}.
We observe that both \texttt{Classic} and \texttt{Sample} can be magnitudes faster on both cases.

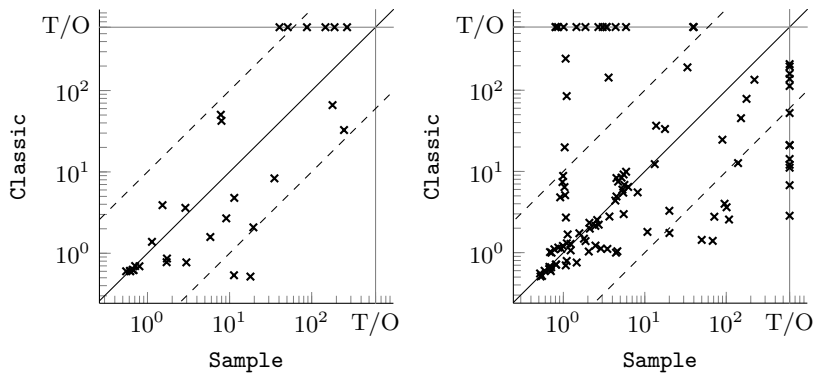
\begin{figure}[t]
	\centering
	\begin{tikzpicture}[auto]
	\begin{axis}[
			width=0.45\textwidth,height=0.45\textwidth,
			table/col sep=comma,
			xlabel=\texttt{Sample},ylabel=\texttt{Classic},
			extra x ticks={600},extra x tick labels={T/O},
			extra y ticks={600},extra y tick labels={T/O},
			xmin=0,ymin=0,xmax=990,ymax=990,xmode=log,ymode=log,
			axis x line*=bottom,
			axis y line*=left
		]
		\addplot[black,forget plot,update limits=false] coordinates {(0.0000001,0.0000001) (1000,1000)};
		\addplot[gray,forget plot,update limits=false] coordinates {(0.0000001,600) (1000,600)};
		\addplot[gray,forget plot,update limits=false] coordinates {(600,0.0000001) (600,1000)};
		\addplot[black,dashed,forget plot,update limits=false] coordinates {(0.1,1) (1000,10000)};
		\addplot[black,dashed,forget plot,update limits=false] coordinates {(1,0.1) (10000,1000)};

		\addplot+[only marks,draw=black,mark=x,thick]  table [x=approx, y=solve] {data/single.csv};
	\end{axis}
	\end{tikzpicture} %
	\begin{tikzpicture}[auto]
	\begin{axis}[
			width=0.45\textwidth,height=0.45\textwidth,
			table/col sep=comma,
			xlabel=\texttt{Sample},ylabel=\texttt{Classic},
			extra x ticks={600},extra x tick labels={T/O},
			extra y ticks={600},extra y tick labels={T/O},
			xmin=0,ymin=0,xmax=990,ymax=990,xmode=log,ymode=log,
			axis x line*=bottom,
			axis y line*=left
		]
		\addplot[black,forget plot,update limits=false] coordinates {(0.0000001,0.0000001) (1000,1000)};
		\addplot[gray,forget plot,update limits=false] coordinates {(0.0000001,600) (1000,600)};
		\addplot[gray,forget plot,update limits=false] coordinates {(600,0.0000001) (600,1000)};
		\addplot[black,dashed,forget plot,update limits=false] coordinates {(0.1,1) (1000,10000)};
		\addplot[black,dashed,forget plot,update limits=false] coordinates {(1,0.1) (10000,1000)};
		\addplot+[only marks,draw=black,mark=x,thick] table[x=approx,y=solve] {data/onestate.csv};
	\end{axis}
	\end{tikzpicture}
	\caption{
		Scatter plot comparing \texttt{Sample} and \texttt{Classic} on single BSCC models (left) and one-state BSCC models (right).
		We plot all models for which at least one method produced a result and count timeouts as 600 seconds (twice the timeout value).
		Note that the plot is logarithmic.
		The dashed lines denote a 10x difference.
	} \label{fig:scatter}
\end{figure}

\begin{table}[!t]
	\caption{
		DTMC and CTMC models comprising a single SCC.
	}
	\centering
	\small
	\begin{tabular}{lr@{\hspace{4ex}}rr@{\hspace{4ex}}rr}
		Name \& Constants &    Size & \texttt{Sample} & \texttt{Classic} & \texttt{PRISM-H} & \texttt{PRISM-E} \\
		\midrule
		cluster/N=2       &     276 &            2.98 &             0.77 &             Conv &             0.71 \\
		cluster/N=4       &     820 &           19.50 &             2.08 &                - &             1.02 \\
		cluster/N=8       &    2772 &          179.50 &            66.20 &                - &             1.09 \\
		cluster/N=16      &   10132 &             T/O &              T/O &                - &               SO \\
		cluster/N=32      &   38676 &               - &                - &                - &                - \\
		cluster/N=64      &  151060 &               - &                - &                - &                - \\
		cluster/N=128     &  597012 &               - &                - &                - &                - \\
		cluster/N=256     & 2373652 &               - &                - &                - &                - \\
		\midrule
		fms/n=1           &      54 &            0.62 &             0.60 &             0.66 &             0.71 \\
		fms/n=2           &     810 &            9.11 &             2.69 &             0.88 &             0.82 \\
		fms/n=3           &    6520 &          269.79 &              T/O &             1.42 &               SO \\
		fms/n=4           &   35910 &             T/O &                - &             3.59 &                - \\
		fms/n=5           &  152712 &               - &                - &            10.77 &                - \\
		fms/n=6           &  537768 &               - &                - &            26.78 &                - \\
		fms/n=7           & 1639440 &               - &                - &            76.40 &                - \\
		fms/n=8           & 4459455 &               - &                - &           207.90 &                - \\
		\midrule
		kanban/t=1        &     160 &            0.71 &             0.70 &             Conv &             0.71 \\
		kanban/t=2        &    4600 &           40.59 &              T/O &                - &             1.26 \\
		kanban/t=3        &   58400 &             T/O &                - &                - &               SO \\
		kanban/t=4        &  454475 &               - &                - &                - &                - \\
		kanban/t=5        & 2546432 &               - &                - &                - &                - \\
		\midrule
		mapk\_cascade/N=1 &     118 &            1.71 &             0.78 &             Conv &             0.80 \\
		mapk\_cascade/N=2 &    2172 &          247.69 &            32.71 &                - &             1.25 \\
		mapk\_cascade/N=3 &   18292 &             T/O &              M/O &                - &               SO \\
		mapk\_cascade/N=4 &   99535 &               - &                - &                - &                - \\
		mapk\_cascade/N=5 &  408366 &               - &                - &                - &                - \\
		mapk\_cascade/N=6 & 1373026 &               - &                - &                - &                - \\
		mapk\_cascade/N=7 & 3979348 &               - &                - &                - &                - \\
		\midrule
		poll3             &      36 &            0.66 &             0.63 &             0.66 &             0.77 \\
		poll4             &      96 &            0.80 &             0.69 &             0.69 &             0.72 \\
		poll5             &     240 &            1.73 &             0.86 &             0.69 &             0.79 \\
		poll6             &     576 &            5.83 &             1.58 &             0.79 &             0.83 \\
		poll7             &    1344 &           35.17 &             8.33 &             0.80 &             1.07 \\
		poll8             &    3072 &          147.11 &              T/O &             0.98 &             1.33 \\
		poll9             &    6912 &             T/O &                - &             1.20 &             1.74 \\
		poll10            &   15360 &               - &                - &             1.68 &               SO \\
		\midrule
		tandem/c=5        &      66 &            0.61 &             0.61 &             0.67 &             0.69 \\
		tandem/c=7        &     120 &            0.68 &             0.62 &             0.62 &             0.76 \\
		tandem/c=15       &     496 &            1.13 &             1.38 &             0.70 &             0.89 \\
		tandem/c=31       &    2016 &            7.97 &            42.43 &             0.81 &             1.01 \\
		tandem/c=63       &    8128 &          189.98 &              T/O &             1.09 &             1.62 \\
		tandem/c=127      &   32640 &             T/O &                - &             1.81 &               SO \\
		tandem/c=255      &  130816 &               - &                - &             4.20 &                - \\
		tandem/c=511      &  523776 &               - &                - &            16.30 &                - \\
		tandem/c=1023     & 2096128 &               - &                - &           113.48 &                - \\
		tandem/c=2047     & 8386560 &               - &                - &              T/O &                -
	\end{tabular}
\end{table}

\begin{table}
	\caption{
		MDP models where the induced DTMC only comprises a single SCC.
	}
	\centering
	\small

	\begin{tabular}{lr@{\hspace{4ex}}rr@{\hspace{4ex}}r}
		Name \& Constants &   Size & \texttt{Sample} & \texttt{Classic} & \texttt{PRISM-E} \\
		\midrule
		mutual3           &   2368 &            7.82 &            50.40 &             0.98 \\
		mutual4           &  27600 &             T/O &              M/O &               SO \\
		\midrule
		phil3             &    956 &            1.52 &             3.90 &             0.90 \\
		phil4             &   9440 &           51.08 &              T/O &               SO \\
		phil5             &  93068 &             T/O &                - &                - \\
		phil-nofair3      &    956 &            2.89 &             3.62 &             0.81 \\
		\midrule
		phil-nofair4      &   9440 &           87.60 &              T/O &               SO \\
		phil-nofair5      &  93068 &             T/O &                - &                - \\
		\midrule
		rabin3            &  27766 &          180.43 &              M/O &               SO \\
		rabin4            & 668836 &             T/O &                - &                -
	\end{tabular}
\end{table}

\begin{table}
	\caption{
		DTMC and CTMC models comprising only single-state BSCCs.
		For crowds, we abbreviate the constants TotalRuns and CrowdSize by TR and CS for readability.
	}
	\centering
	\scriptsize
	\begin{tabular}{lrr@{\hspace{4ex}}rrrr@{\hspace{4ex}}rr}
		Name \& Constants     &      Size & Components & \texttt{Naive} & \texttt{Solve} & \texttt{Sample} & \texttt{Classic} & \texttt{PRISM-H} & \texttt{PRISM-E} \\
		\midrule
		embedded/MAX\_COUNT=2 &      3478 &         36 &          63.12 &           6.01 &            4.35 &             4.38 &             1.46 &             1.74 \\
		embedded/MAX\_COUNT=3 &      4323 &         36 &         113.72 &           6.39 &            4.49 &             4.91 &             1.93 &             1.87 \\
		embedded/MAX\_COUNT=4 &      5168 &         36 &          71.18 &           7.17 &            5.31 &             5.97 &             1.88 &             1.93 \\
		embedded/MAX\_COUNT=5 &      6013 &         36 &          77.03 &           7.73 &            5.61 &             6.83 &             2.00 &             2.10 \\
		embedded/MAX\_COUNT=6 &      6858 &         36 &          81.70 &           6.60 &            4.71 &             7.49 &             1.85 &             2.20 \\
		embedded/MAX\_COUNT=7 &      7703 &         36 &          91.53 &           7.71 &            5.18 &             8.39 &             2.41 &             2.46 \\
		embedded/MAX\_COUNT=8 &      8548 &         36 &          87.77 &           7.88 &            5.49 &             9.18 &             2.36 &             2.65 \\
		\midrule
		brp/N=16,MAX=2        &       677 &         35 &          35.74 &           1.12 &            0.95 &             1.12 &             1.24 &             0.87 \\
		brp/N=16,MAX=3        &       886 &         36 &           8.70 &           1.27 &            0.99 &             1.21 &             1.28 &             0.82 \\
		brp/N=16,MAX=4        &      1095 &         37 &           8.75 &           1.28 &            1.10 &             1.30 &             1.34 &             0.83 \\
		brp/N=16,MAX=5        &      1304 &         38 &           8.16 &           1.58 &            1.13 &             1.68 &             1.51 &             0.91 \\
		brp/N=32,MAX=2        &      1349 &         67 &            T/O &           2.81 &            2.47 &             2.14 &             2.15 &             0.97 \\
		brp/N=32,MAX=3        &      1766 &         68 &              - &           2.89 &            2.11 &             1.98 &             2.33 &             1.03 \\
		brp/N=32,MAX=4        &      2183 &         69 &              - &           3.04 &            2.09 &             2.33 &             2.45 &             1.10 \\
		brp/N=32,MAX=5        &      2600 &         70 &              - &           4.18 &            2.60 &             2.52 &             2.62 &             1.18 \\
		brp/N=64,MAX=2        &      2693 &        131 &              - &           5.61 &            5.44 &             5.48 &             5.76 &             1.30 \\
		brp/N=64,MAX=3        &      3526 &        132 &              - &           7.65 &            6.25 &             6.44 &             6.15 &             1.32 \\
		brp/N=64,MAX=4        &      4359 &        133 &              - &           6.68 &            4.49 &             8.29 &             6.60 &             1.49 \\
		brp/N=64,MAX=5        &      5192 &        134 &              - &           7.39 &            5.89 &             9.91 &             6.75 &             1.49 \\
		\midrule
		crowds/TR=3,CS=5      &      1198 &         56 &            T/O &           3.38 &            3.47 &             1.11 &             1.40 &             1.01 \\
		crowds/TR=4,CS=5      &      3515 &        126 &              - &          16.98 &           10.78 &             1.80 &             3.31 &             1.14 \\
		crowds/TR=3,CS=10     &      6563 &        286 &              - &          29.90 &           20.02 &             3.28 &             5.87 &             1.71 \\
		crowds/TR=5,CS=5      &      8653 &        252 &              - &         169.08 &          101.00 &             3.62 &             8.49 &             1.83 \\
		crowds/TR=6,CS=5      &     18817 &        462 &              - &            T/O &             T/O &             6.77 &            22.48 &             2.87 \\
		crowds/TR=3,CS=15     &     19228 &        816 &              - &              - &               - &            11.12 &            24.22 &             4.39 \\
		crowds/TR=4,CS=10     &     30070 &       1001 &              - &              - &               - &            20.99 &            44.19 &             6.22 \\
		crowds/TR=3,CS=20     &     42318 &       1771 &              - &              - &               - &            52.40 &            82.78 &            11.73 \\
		crowds/TR=5,CS=10     &    111294 &       3003 &              - &              - &               - &           138.22 &           299.29 &            62.68 \\
		crowds/TR=4,CS=15     &    119800 &       3876 &              - &              - &               - &           209.68 &              T/O &            91.18 \\
		crowds/TR=4,CS=20     &    333455 &      10626 &              - &              - &               - &              T/O &                - &              T/O \\
		\midrule
		egl/N=5,L=2           &     33790 &          1 &           2.43 &           1.79 &            1.79 &             1.49 &             1.07 &             1.71 \\
		egl/N=5,L=4           &     74750 &          1 &           4.06 &           2.79 &            2.72 &             2.21 &             1.63 &             2.33 \\
		egl/N=5,L=6           &    115710 &          1 &           5.34 &           4.02 &            3.68 &             2.79 &             2.12 &             2.84 \\
		egl/N=5,L=8           &    156670 &          1 &           6.28 &           4.64 &            5.51 &             2.98 &             3.20 &             2.96 \\
		egl/N=10,L=2          &  66060286 &          1 &            M/O &            T/O &             M/O &              T/O &           136.59 &              M/O \\
		egl/N=10,L=4          & 149946366 &          1 &              - &              - &               - &                - &              T/O &                - \\
		\midrule
		leader\_sync3\_2      &        26 &          1 &            T/O &           0.53 &            0.52 &             0.55 &             0.65 &             0.66 \\
		leader\_sync4\_2      &        61 &          1 &              - &           0.58 &            0.55 &             0.52 &             0.62 &             0.76 \\
		leader\_sync3\_3      &        69 &          1 &              - &           0.62 &            0.52 &             0.51 &             0.60 &             0.68 \\
		leader\_sync5\_2      &       141 &          1 &              - &           0.67 &            0.70 &             0.59 &             0.66 &             0.62 \\
		leader\_sync3\_4      &       147 &          1 &              - &           0.59 &            0.59 &             0.60 &             0.61 &             0.63 \\
		leader\_sync4\_3      &       274 &          1 &              - &           0.67 &            0.68 &             0.62 &             0.70 &             0.63 \\
		leader\_sync4\_4      &       812 &          1 &              - &           0.72 &            0.71 &             0.66 &             0.69 &             0.79 \\
		leader\_sync5\_3      &      1050 &          1 &              - &           0.80 &            0.81 &             0.71 &             0.73 &             0.78 \\
		leader\_sync5\_4       &      4244 &          1 &              - &           1.40 &            1.10 &             0.79 &             0.97 &             0.82 \\
		\midrule
		nand/N=20,K=1         &     78332 &         21 &            T/O &         114.10 &           17.70 &            33.31 &             8.80 &             3.80 \\
		nand/N=20,K=2         &    154942 &         21 &              - &            T/O &          221.86 &           134.69 &            18.65 &             9.15 \\
		nand/N=20,K=3         &    231552 &         21 &              - &              - &             T/O &           112.79 &            31.86 &            16.96 \\
		nand/N=20,K=4         &    308162 &         21 &              - &              - &               - &           196.03 &            48.30 &            28.14 \\
		nand/N=40,K=1         &   1004862 &         41 &              - &              - &               - &              T/O &           204.29 &            96.39 \\
		nand/N=40,K=2         &   2003082 &         41 &              - &              - &               - &                - &              T/O &              T/O
	\end{tabular}
\end{table}

\begin{table}[!t]
	\caption{
		MDP models where the induced DTMC only comprises single-state BSCCs.
		All zeroconf\_dl models also have constants N=1000,K=1, which we omitted for readability.
	}
	\centering
	\scriptsize
	\begin{tabular}{lr@{\hspace{4ex}}rrrrr@{\hspace{4ex}}r}
		Name \& Constants                       &     Size & BSCCs & \texttt{Naive} & \texttt{Solve} & \texttt{Sample} & \texttt{Classic} & \texttt{PRISM-E} \\
		\midrule
		coin2/K=2                               &      272 &     8 &            T/O &           1.41 &            1.23 &             1.07 &             1.12 \\
		coin2/K=4                               &      528 &     8 &              - &          95.60 &           50.02 &             1.44 &             1.18 \\
		coin2/K=8                               &     1040 &     8 &              - &            T/O &             T/O &             2.85 &             1.52 \\
		coin2/K=16                              &     2064 &     8 &              - &              - &               - &            14.23 &             2.73 \\
		coin4/K=2                               &    22656 &    64 &              - &              - &               - &           159.56 &            33.29 \\
		coin4/K=4                               &    43136 &    64 &              - &              - &               - &              T/O &               SO \\
		\midrule
		csma2\_2                                &     1038 &     3 &            T/O &           0.98 &            1.07 &             0.70 &             0.81 \\
		csma2\_4                                &     7958 &     7 &              - &           4.75 &            4.60 &             1.04 &             1.10 \\
		csma3\_2                                &    36850 &     7 &              - &          18.04 &           20.08 &             1.74 &             1.76 \\
		csma2\_6                                &    66718 &    27 &              - &            T/O &           71.37 &             2.77 &             2.84 \\
		csma4\_2                                &   761962 &     9 &              - &              - &             T/O &            12.09 &            10.42 \\
		csma3\_4                                &  1460287 &    13 &              - &              - &               - &            21.13 &            15.83 \\
		csma3\_6                                & 84856004 &   125 &              - &              - &               - &              M/O &              M/O \\
		\midrule
		firewire/delay=3                        &     4093 &     2 &          13.22 &           1.65 &            1.57 &             1.73 &             1.35 \\
		firewire/delay=36                       &   212268 &     2 &            T/O &            T/O &           13.79 &            36.58 &               SO \\
		firewire\_abst/delay=3                  &      611 &     1 &           2.50 &           0.72 &            0.67 &             0.67 &             0.77 \\
		firewire\_abst/delay=36                 &      776 &     1 &           3.24 &          10.30 &            0.67 &             0.63 &             0.80 \\
		firewire\_dl/deadline=200,delay=3       &    14824 &   190 &            T/O &          24.47 &           13.25 &            12.31 &             6.11 \\
		firewire\_dl/deadline=200,delay=36      &    68056 &   328 &              - &          17.86 &            3.60 &           143.27 &            39.59 \\
		firewire\_dl/deadline=400,delay=3       &    69683 &   327 &              - &            T/O &           33.40 &           191.34 &            57.21 \\
		firewire\_dl/deadline=600,delay=3       &   168411 &   515 &              - &              - &           39.22 &              T/O &           227.28 \\
		firewire\_dl/deadline=400,delay=36      &   220565 &   528 &              - &              - &            3.39 &                - &           212.07 \\
		firewire\_dl/deadline=800,delay=3       &   290017 &   715 &              - &              - &           39.91 &                - &              T/O \\
		firewire\_dl/deadline=600,delay=36      &   375765 &   728 &              - &              - &            4.39 &                - &                - \\
		firewire\_dl/deadline=800,delay=36      &   530965 &   928 &              - &              - &            5.88 &                - &                - \\
		firewire\_impl\_dl/deadline=200,delay=3 &    80980 &  1007 &            T/O &            T/O &          177.43 &            78.35 &            23.75 \\
		firewire\_impl\_dl/deadline=400,delay=3 &   434364 &  3638 &              - &              - &             M/O &              T/O &              T/O \\
		\midrule
		leader3                                 &      364 &     3 &           1.30 &           0.82 &            0.81 &             0.71 &             0.87 \\
		leader4                                 &     3172 &     4 &           5.79 &           1.90 &            1.87 &             1.39 &             1.20 \\
		leader5                                 &    27299 &     5 &         116.45 &           7.96 &            8.17 &             5.53 &             3.53 \\
		leader6                                 &   237656 &     6 &            T/O &          95.19 &           89.66 &            24.58 &            13.96 \\
		leader7                                 &  2095783 &     7 &              - &            T/O &             T/O &              T/O &               SO \\
		\midrule
		wlan0/COL=0                             &     2954 &     1 &           2.85 &           2.55 &            1.46 &             0.76 &             0.91 \\
		wlan1/COL=0                             &     8625 &     1 &          14.59 &           2.40 &            4.43 &             1.00 &             1.07 \\
		wlan2/COL=0                             &    28480 &     1 &         247.80 &           5.11 &           68.22 &             1.40 &             1.67 \\
		wlan3/COL=0                             &    96302 &     1 &            T/O &            T/O &          108.18 &             2.56 &             2.64 \\
		wlan4/COL=0                             &   345000 &     1 &              - &              - &           95.33 &             4.00 &             4.19 \\
		wlan5/COL=0                             &  1295218 &     1 &              - &              - &          139.64 &            12.67 &            12.08 \\
		wlan6/COL=0                             &  5007548 &     1 &              - &              - &          151.80 &            45.26 &            42.98 \\
		\midrule
		wlan\_dl0/deadline=80                   &   189703 &  2940 &            T/O &            T/O &             T/O &              T/O &           204.29 \\
		wlan\_dl1/deadline=80                   &   450627 &  7855 &              - &              - &               - &                - &              T/O \\
		\midrule
		zeroconf/reset=true,N=1000,K=2          &      670 &    23 &            T/O &           0.85 &            0.72 &             1.00 &             1.14 \\
		zeroconf/reset=true,N=20,K=2            &      670 &    23 &              - &           0.71 &            0.69 &             1.02 &             1.03 \\
		zeroconf/reset=true,N=1000,K=4          &     1088 &    23 &              - &           0.89 &            0.78 &             1.10 &             1.01 \\
		zeroconf/reset=true,N=20,K=4            &     1088 &    23 &              - &           0.78 &            2.05 &             1.03 &             1.17 \\
		zeroconf/reset=true,N=1000,K=6          &     1506 &    23 &              - &           0.89 &            0.87 &             1.15 &             1.19 \\
		zeroconf/reset=true,N=20,K=6            &     1506 &    23 &              - &           0.84 &            2.89 &             1.12 &             1.30 \\
		zeroconf/reset=true,N=1000,K=8          &     1924 &    23 &              - &           0.97 &            1.22 &             1.28 &             1.22 \\
		zeroconf/reset=true,N=20,K=8            &     1924 &    23 &              - &           0.67 &            2.48 &             1.22 &             1.27 \\
		zeroconf/reset=false,N=1000,K=2         &    89586 &  3519 &              - &           0.99 &            0.86 &              T/O &               SO \\
		zeroconf/reset=false,N=20,K=2           &    89586 &  3519 &              - &           0.91 &            0.80 &                - &                - \\
		zeroconf/reset=false,N=1000,K=4         &   307768 &  7645 &              - &           0.99 &            0.81 &                - &                - \\
		zeroconf/reset=false,N=20,K=4           &   307768 &  7645 &              - &           0.81 &            1.01 &                - &                - \\
		\midrule
		zeroconf\_dl/reset=true,deadline=10     &     3835 &   245 &          16.32 &           1.91 &            1.08 &             2.71 &             1.99 \\
		zeroconf\_dl/reset=true,deadline=20     &     7670 &   485 &           6.39 &           2.67 &            0.91 &             4.80 &             2.31 \\
		zeroconf\_dl/reset=true,deadline=30     &    11605 &   725 &           8.88 &           1.99 &            1.03 &             6.41 &             2.80 \\
		zeroconf\_dl/reset=false,deadline=10    &    12240 &   274 &          14.36 &           2.41 &            1.05 &             5.10 &             2.76 \\
		zeroconf\_dl/reset=true,deadline=40     &    15640 &   965 &           7.93 &           2.20 &            0.99 &             8.80 &             3.30 \\
		zeroconf\_dl/reset=true,deadline=50     &    19775 &  1205 &           8.82 &           2.13 &            0.98 &             7.40 &             3.28 \\
		zeroconf\_dl/reset=false,deadline=20    &    53620 &  1192 &          12.44 &           2.77 &            1.04 &            19.81 &            10.21 \\
		zeroconf\_dl/reset=false,deadline=30    &   132806 &  4316 &           8.11 &           1.81 &            1.10 &            84.67 &            27.59 \\
		zeroconf\_dl/reset=false,deadline=40    &   251740 & 10048 &           8.05 &           4.34 &            1.07 &           244.92 &            87.36 \\
		zeroconf\_dl/reset=false,deadline=50    &   411031 & 18740 &           8.23 &           9.65 &            1.45 &              T/O &           200.76
	\end{tabular}
\end{table}

}{}

\end{document}


\section{Technical Appendix}

\subsection{Proofs -- Markov Chains}
%
\begin{proof}[Proof of \cref{stm:var_mc_exponential}]
	\underline{Upper bound}: By assumption, the probability to reach the goal is $1$ for every state of the chain.
	Through \cite[Lemma~5.1]{DBLP:journals/jacm/BrazdilKK14} we get that $\nstepnongoal<n> \leq 2 c^n$ for large $n$, where $c = \exp(-\cardinality{\States}^{-1} p_{\min}^{\cardinality{\States}})$.
	By solving $2 c^n = \threshold$ for $n$ we obtain the result.

	\underline{Lower bound}: Consider the upper part of the MDP in \cref{fig:exponential_memory}, i.e.\ all states $s_i$ and $r_i$, and state $d$, a Markov chain.
	Moreover, let $\goalset = \{d\}$.
	After $n+1$ steps, $p^n$ executions are in $d$ and the rest is back in state $s_0$.
	To have at least $\threshold$ many executions in $d$, we thus require at least $m = \log\threshold / \log(1 - p^n)$ such rounds with length $n + 1$.
	Consequently, we need at least $- \log\threshold \cdot (n + 1) \cdot p^{-n}$ steps, as $\log(1 - x) \leq -x$ for $x \in (0, 1)$.
\end{proof}
%
\begin{proof}[Proof sketch of \cref{stm:cvar_ssp_polynomial}]
	Let $P$ denote the transition matrix of the Markov chain.
	Then, $\nstepprob<n> = P^n \cdot e_1$ where $e_1$ is the unit vector corresponding to the initial state.
	By setting $P_0 = P$ and iterating $P_{k + 1} = P_k \cdot P_k$, we get $P_k = P^{2^k}$.
	We repeat this process until we obtain that $\nstepnongoal<{2^{k+1}}>{} < \threshold$.
	By \cref{stm:var_mc_exponential}, this requires only polynomially many steps, each of which comprises a matrix multiplication, which again amounts to polynomially many operations.
	Then, we know that $\VaR_\threshold$ lies between $2^k$ and $2^{k+1}$.
	(Note that the size of the entries in $P_k$ may grow exponentially, hence the overall time complexity is exponential if the algorithm is implemented with arbitrary precision arithmetic.)
	We repeat this process recursively by computing $P_k \cdot P_0 \cdot e_1$, $P_k \cdot P_1 \cdot e_1$ etc., similar to a binary search for VaR.
	After polynomially many steps, we obtain the precise VaR together with the distribution of the remaining executions.
	Finally, we compute $\expectedcost(s)$ for all states in polynomial time and together obtain CVaR by \cref{stm:cvar_equation}.
\end{proof}

\subsection{Proofs -- Markov Decision Processes}

\begin{proof}[Proof of \cref{stm:cvar_optimal_policy_exists}]
	For every $n$, let $\Strategies_n$ the set of policies achieving $\VaR_\threshold(\strategy) \leq n$.
	Clearly, $\Strategies_n \subseteq \Strategies_{n + 1}$.
	Moreover, $\Strategies_n$ is exactly the set of policies reaching the goal states with probability at least $1 - \threshold$ within $n$ steps---a closed set by optimality of deterministic policies \cite[Chap.~4]{DBLP:books/wi/Puterman94}.
	Next, we define $\Strategies_n' = \Strategies_n \setminus \Strategies_{n - 1}$.
	We have that $\{\strategy \in \Strategies \mid \CVaR_\threshold(\strategy) < \infty\} \subseteq \Union_{n \in \Naturals_0} \Strategies_n = \Union_{n \in \Naturals_0} \Strategies'_n$.
	Consequently, there exists an $n$ such that
	\begin{equation*}
		\CVaR_\threshold^* = {\inf}_{\strategy \in \Strategies'_n} \CVaR_\threshold(\strategy) = n + \tfrac{1}{\threshold} {\inf}_{\strategy \in \Strategies'_n} \nstepexpectedcost<n><\strategy>.
	\end{equation*}
	By definition of $\Strategies'_n$, the witness sequence $\strategy_i \subseteq \Strategies_n$ has an accumulation point in $\Strategies_n$.
\end{proof}
%
\begin{proof}[Proof of \cref{stm:memoryless_after_var}]
	Let $\strategy^*$ be an optimal stationary policy minimizing the expected time to reach the goal states $\goalset$, which always exists \cite[Prop.~2]{DBLP:journals/mor/BertsekasT91}.
	Define $\strategy'$ as follows:
	For the first $n$ steps, copy $\strategy$, and starting in the $n$-th step, follow $\strategy^*$.
	Clearly, $\nstepprob<n><\strategy> = \nstepprob<n><\strategy'>$ and thus $\VaR_\threshold(\strategy) = \VaR_\threshold(\strategy')$ as well as $\nstepnongoal<n><\strategy> = \nstepnongoal<n><\strategy'>$.
	Additionally, we have $\nstepexpectedcost<n><\strategy> \geq \nstepexpectedcost<n><\strategy'>$.
	Together, we get $\CVaR_\threshold(\strategy) \geq \CVaR_\threshold(\strategy')$ by \cref{stm:cvar_equation}.
\end{proof}
%
\begin{proof}[Proof of \cref{stm:cvar_lp_solution}]
	\underline{First part}: Fix a policy $\strategy$ with $\VaR_\threshold(\strategy) = n$ and $\CVaR_\threshold(\strategy) = C$.
	We construct an assignment to the LP's variables:
	Set $p_{s, i} = \nstepprob<i><\strategy>(s)$ and $p_{s, a, i} = \Probability[s_i = s, a_i = a \mid \initialstate, \strategy]$.
	This assignment satisfies the first three constraints.
	For the fourth constraint, observe that by $\VaR_\threshold(\strategy) = n$, we have that $\sum_{s \in \goalset} \nstepprob<n - 1><\strategy>(s) < 1 - \threshold \leq \sum_{s \in \goalset} \nstepprob<n><\strategy>(s)$.
	By \cref{stm:cvar_equation}, we have that $\CVaR_\threshold(\strategy) = C = n + \tfrac{1}{\threshold} \nstepexpectedcost<n><\strategy>$.
	Since $\nstepexpectedcost<n><\strategy> = \sum_{s \in \States} \nstepprob<n><\strategy>(s) \cdot \expectedcost(s) = \sum_{s \in \States} p_{s, n} \cdot \expectedcost(s)$, we get that $\nstepexpectedcost<n><\strategy> = \threshold \cdot (c - n)$, proving the claim.

	\underline{Second part}: We construct the policy $\strategy$ as follows.
	For the first $n$ steps, at step $i$ in state $s$, choose action $a$ with probability $p_{s, a, i}$.
	Afterwards, i.e.\ starting from step $n$, in state $s$ follow a policy achieving the optimal expected cost $\expectedcost(s)$ (note the similarity to \cref{stm:memoryless_after_var}).
	Clearly, $\nstepprob<i><\strategy>(s) = p_{s, i}$ and thus $v = \nstepexpectedcost<n><\strategy>$.
	Now, we need to distinguish two cases.
	We have that $\VaR_\threshold(\strategy) = n - 1$ if $\sum_{s \in \goalset} p_{s, n - 1} = 1 - \threshold$ and $\VaR_\threshold(\strategy) = n$ otherwise.
	In the latter case, we directly get that $\CVaR_\threshold(\strategy) = n + \frac{1}{\threshold} v$ by \cref{stm:cvar_equation}.
	In the former, observe that in step $n - 1$ a fraction of exactly $1 - \threshold$ executions has reached the goal states.
	Consequently, the remaining $\nstepnongoal<n-1><\strategy> = \threshold$ executions all need to perform at least one more step, thus $\nstepexpectedcost<n - 1><\strategy> = \threshold + \nstepexpectedcost<n><\strategy>$.
	Inserting yields $\CVaR_\threshold(\strategy) = (n - 1) + \frac{1}{\threshold} \nstepexpectedcost<n - 1><\strategy> = n + \frac{1}{\threshold} \nstepexpectedcost<n><\strategy> = n + \frac{1}{\threshold} v$.
\end{proof}
%
\begin{proof}[Proof of \cref{stm:var_bound_from_cvar}]
	We have $\VaR_\threshold(\strategy^*) \leq \CVaR_\threshold(\strategy^*) \leq \CVaR_\threshold(\strategy)$, where the first inequality follows from definition and the second from optimality of $\strategy^*$.
\end{proof}
%
\begin{proof}[Proof sketch of \cref{stm:mdp_cvar_exponential_bound}]
	Since a proper policy exists, there also exists a proper memoryless deterministic policy $\strategy^p$ \cite[Prop.~2]{DBLP:journals/mor/BertsekasT91}.
	As $\strategy^p$ is deterministic, \cref{stm:var_mc_exponential} is applicable with $p_{\min}$ being the smallest transition probability in the MDP, and $\VaR_\threshold(\strategy^p)$ is at most exponential, as is $e(s)$ by similar reasoning.
	Together, $n + \tfrac{1}{\threshold} \nstepexpectedcost<n><\strategy^p>$ for $n = \VaR_\threshold(\strategy^p)$ is of at most exponential size, proving the claim through \cref{stm:cvar_equation}.
\end{proof}
%
\begin{proof}[Proof of \cref{stm:cvar_from_pareto}]
	\underline{First part}: Fix $n$ and $(p, E) \in \pareto<n><s>$ together with the policy $\strategy$ and set $\threshold = 1 - p$.
	Clearly, $\VaR_\threshold(\strategy) \leq n$ since at least a fraction of $p$ executions reach the goal within $n$ steps.
	Let $\VaR_\threshold(\strategy) = n' \leq n$ and $p' \leq p$ the exact fraction of executions that reach within $n'$ steps.
	By \cref{stm:cvar_equation} we get $\CVaR_\threshold(\strategy) = n' + \frac{1}{\threshold} \nstepexpectedcost<n'><\strategy>$.
	It remains to show that $n' + \frac{1}{\threshold} \nstepexpectedcost<n'><\strategy> \leq n + \frac{1}{\threshold} E$.
	After $n'$ steps, at most $1 - p'$ executions have not reached the goal.
	Thus $\nstepexpectedcost<n'><\strategy> \leq (1 - p') \cdot (n - n') + E \leq \threshold \cdot (n - n') + E$.

	\underline{Second part}: Fix a state $s$, a policy $\strategy$ and assume that $\VaR_\threshold(\strategy) = n$ and $\CVaR_\threshold(\strategy) = C$.
	We show that $(1 - \threshold, \threshold \cdot (C - n)) \in \pareto<n><s>$.
	The probability to reach the goal states in $n$ steps under $\strategy$ is at least $1 - \threshold$ by the definition of VaR, proving the first component.
	From \cref{stm:cvar_equation} we get $\CVaR_\threshold(\strategy) = C = n + \frac{1}{\threshold} \nstepexpectedcost<n><\strategy>$.
	Reordering yields $\nstepexpectedcost<n><\strategy> = \threshold \cdot (E - n)$, proving the second component.
\end{proof}
%
\begin{proof}[Proof of \cref{stm:pareto_shape}]
	\underline{Closure}:
		If we have that $(p, E) \in \pareto<n><s>$, we also have that $(p', E), (p, E') \in \pareto<n><s>$ for all $0 \leq p' \leq p$ and $E' \geq E$ by definition.

	\underline{Convexity}:
		Let $\strategy$ and $\strategy'$ be two strategies corresponding to two points $(p, E), (p', E') \in \pareto<n><s>$ and fix $\lambda \in [0, 1]$.
		Following $\strategy$ with probability $\lambda$ and $\strategy'$ with probability $1 - \lambda$ reaches the goal set with at least $\lambda p + (1 - \lambda) p'$ and similar for the expectation. 

	\underline{Polygon}:
		We prove by induction that $\pareto<n><s>$ is a polygon with deterministic policies as vertices.

		For $n = 0$, we either have that $\pareto<0><s> = [0, 1] \times \Reals_{\geq 0}$ if $s \in \goalset$, or, if $s \notin \goalset$, $\pareto<0><s> = \{0\} \times [e(s), \infty)$.
		The first case follows trivially from the definition.
		For the second, observe that the probability to reach the goal in $0$ steps is zero and the expected time for all remaining executions is at least $e(s)$. 
		In both cases, $\pareto<0><s>$ is a polygon and the extremal points $(1, 0)$ and $(0, e(s))$, respectively, are achievable by a stationary deterministic policy.

		For the induction step, fix $n$ and a state $s$.
		We prove that $(p, E) \in \pareto<n+1><s>$ iff there exist a distribution over the actions $w : \stateactions(s) \to [0, 1]$ and achievable points $(p_{a, s'}, E_{a, s'}) \in \pareto<n><s'>$ for all $a \in \stateactions(s), s' \in \States$ such that
		\begin{equation*}
			(p, E) = {\sum}_{a \in \stateactions(s), s' \in \States} w(a) \cdot \mdptransitions(s, a, s') \cdot (p_{a, s'}, E_{a, s'})
		\end{equation*}
		The first equality follow directly from linearity of reachability:
		If the successors under action $a$ can reach the goal with probabilities $p_{a, s'}$ in $n$ steps, then the current state can reach the goal with the average of these probabilities in $n + 1$ steps (note the similarity to regular value iteration for reachability).
		For the second equality, recall the interpretation of the Pareto set:
		For all actions $a$ and successors $s'$ there exists a policy $\strategy_{a, s'}$ such that after $n$ steps at least a fraction of $p_{a, s'}$ executions have reached the goal and the expected time to reach the goal is at most $E_{a, s'}$.
		So, in state $s$, we can take one step and then simply follow these respective strategies to achieve the values in the equation.
		Dually, if there is a policy $\strategy$ for $(p, E) \in \pareto<n+1><s>$, we immediately get strategies achieving the respective values in the successors.
		Together with the induction hypothesis, this linear characterization proves that $\pareto<n+1><s>$ is a polygon, and the extremal points are achievable by Markovian deterministic policies.
\end{proof}
%
\begin{proof}[Proof of \cref{stm:cvar_algo_correct}]
	\underline{Termination}: We show that there always exists an $n$ such that $\pareto<n><\initialstate>$ contains $(1 - \threshold, E)$ for any $E$.
	Since we always have a proper strategy $\strategy$, $\VaR_\threshold(\strategy) = n < \infty$ and thus $(1 - \threshold, \nstepexpectedcost<n><\strategy>) \in \pareto<n><\initialstate>$ by \cref{stm:cvar_from_pareto}.
	By \cref{stm:mdp_var_exponential_bound}, we know that $\VaR_\threshold(\strategy)$ is at most exponentially large, proving the step bound.

	\underline{Correctness}: Let $\strategy^*$ be an optimal strategy, i.e.\ achieving the optimal CVaR.
	Further, let $\VaR_\threshold(\strategy^*) = n^*$ and $\CVaR_\threshold(\strategy^*) = E^*$.
	By \cref{stm:cvar_from_pareto}, we have that $(1 - \threshold, \threshold \cdot (E^* - n^*)) \in \pareto<n^*><\initialstate>$.
	Moreover, $\CVaR_\threshold(\strategy) \geq n^*$ for every strategy $\strategy$, so the algorithm runs until at least iteration $n^*$.
	Consequently, $\mathtt{c} \leq E^*$ when the algorithm terminates.
	To conclude, if we had $\mathtt{c} < E^*$, there must exist another strategy $\strategy'$ which achieves a better CVaR, again by virtue of \cref{stm:cvar_from_pareto}.
	Together, we have that $\mathtt{c} = E^*$ at the end.
\end{proof}

%
%

\bibliography{main}